\newcommand{\robert}[1]{\textcolor{green!70!black}{Robert: #1}}
\newcommand{\helena}[1]{\textcolor{orange!90}{Helena: #1}}
\newcommand{\felix}[1]{\textcolor{blue!90}{Felix: #1}}
\newtheorem{problem}{Problem}
\newtheorem{conjecture}{Conjecture}
\newtheorem*{claim*}{Claim}
\newenvironment{claimproof}{{\bf Proof of the claim:}}{\hfill}
\newcommand{\cT}{\ensuremath{\mathcal{T}}}
\DeclareMathOperator{\Cone}{\Lambda}
\DeclareMathOperator{\conv}{conv}
\DeclareMathOperator{\Oh}{\mathcal{O}}
\definecolor{darkblue}{rgb}{0,0,0.7} 
\title{On Triangular Separation of Bichromatic Point Sets}
\author[1]{Helena Bergold}
\author[2]{Arun Kumar Das}
\author[3]{Robert Lauff}
\author[3]{Manfred Scheucher}
\author[4]{Felix Schr{\"o}der}
\author[5]{Marie Diana	Sieper}
\affil[1]{Technische Universit{\"a}t M{\"u}nchen\\helena.bergold@gmail.com}
\affil[2]{Czech Technical University\\arund426@gmail.com}
\affil[3]{Technische Universität Berlin\\lauff@math.tu-berlin.de,scheucher@math.tu-berlin.de}
\affil[4]{Charles University\\schroder@kam.mff.cuni.cz}
\affil[5]{JMU W{\"u}rzburg\\marie.sieper@uni-wuerzburg.de}
\authorrunning{H. Bergold et al.}
\date{}
\begin{document}

\maketitle

\begin{abstract}
We address the problem of computing the minimum number of triangles to separate a set of blue points from a set of red points in $\mathbb{R}^2$. A set of triangles is a \emph{separator} of one color from the other if every point of that color is contained in some triangle and no triangle contains points of both colors. We consider all possible variants of the problem depending on whether the triangles are allowed to overlap or not and whether all points or just the blue points need to be contained in a triangle. We show that computing the minimum cardinality triangular separator of a set of blue points from a set of red points is NP-hard and further investigate worst case bounds on the minimum cardinality of triangular separators for a bichromatic set of $n$ points.
\end{abstract}

\section{Introduction}
\label{sec:intro}
\emph{Separability problems} on colored point sets are concerned with computing a class of objects of given shapes for a given set of colored points, such that every point of a fixed color is covered with the computed objects and no object contains points of different colors. The exact problem varies depending on the underlying optimizing criteria. 
We study triangles separating red and blue points in $\mathbb{R}^2$.
Separating red and blue points from each other is well studied in computational geometry under the name \emph{class cover problems}~\cite{armaselu2017geometric,cannon2004approximation,devinney2003class} due to their wide applications in data mining, pattern recognition, learning theory, and operations research. The problem is formally defined as follows:

\begin{tcolorbox}
	\begin{problem}[\textsc{Overlap-Separation}]
		\label{prb:both_overlap}
  		{\bf Input:} A set of red and blue points $P$ in $\mathbb{R}^2$.
		{\bf Output:}  A minimum set \cT of triangles containing all points such that no member of \cT contains points of both colors.
	\end{problem}   
\end{tcolorbox}

Note that this problem can be subdivided into two problems. First find the minimum number of triangles $\cT_b$ containing all the blue points in $P$ such that no triangle in $\cT_b$ contains a red point. 
Then compute the minimum number of triangles $\cT_r$ containing the red, but no blue points. The set $\cT_b \cup \cT_r$ is an optimal solution for \Cref{prb:both_overlap}. Thus it is enough to solve this:
\begin{tcolorbox}
	\begin{problem}[\textsc{Overlap-Separation-of-blue}]
		\label{prb:blue_overlap}
    	{\bf Input:} A set of red and blue points $P$ in $\mathbb{R}^2$.
		{\bf Output:} A minimum set \cT of triangles containing all blue but no red points of $P$.
	\end{problem}   
\end{tcolorbox}    

However once the triangles are not allowed to intersect each other, the problem becomes more restricted and spawns two different problems for separating the points as follows.   

\begin{tcolorbox}
	\begin{problem}[\textsc{Disjoint-Separation}]
		\label{prb:both_no_overlap}
    	{\bf Input:} A set of red and blue points $P$ in $\mathbb{R}^2$.
		{\bf Output:} A minimum set \cT of disjoint triangles containing all points such that no member of \cT contains points of both colors.
	\end{problem} 
	\begin{problem}[\textsc{Disjoint-Separation-of-blue}]
		\label{prb:blue_no_overlap}
    	{\bf Input:} A set of red and blue points $P$ in $\mathbb{R}^2$.
		{\bf Output:} A minimum set \cT of disjoint triangles containing all blue but no red points of $P$.
	\end{problem}   
\end{tcolorbox}
Researchers addressed the problem of computing the minimum number of separators for a bichromatic point set. 
Canon and Cowen~\cite{cannon2004approximation} first considered the problem of finding the minimum number of circles to separate a bichromatic point set in a general metric space. They proved that computing the minimum number of circles centered at the blue points to separate them from the red ones is NP-hard and presented a $(\ln n+ 1)$-factor approximation algorithm and devised a PTAS for the problem in $\mathbb{R}^d$. Bereg et al.~\cite{bereg2012class} showed that computing the minimum number of axis-parallel rectangles to separate a bichromatic planar point set is NP-hard. In the same paper, they addressed the problem of separating objects by vertical or horizontal strips and presented a $\mathcal{O}(r \log r + b \log b + \sqrt{rb})$-time exact algorithm where $r$ and $b$ are the numbers of red and blue points respectively. They also proved separation to be NP-hard if the separating objects are {half-strips}/squares and presented $\mathcal{O}(1)$-approximations. 
But separation and partition problems are not only interesting on the algorithmic side. Motivated by a question of Aharoni and Saks, Dumitrescu et al.~\cite{dumitrescu2001matching,dumitrescu2001convexpartit,dumitrescu2000matching} showed that every bichromatic set of $n$ points can be partitioned into $\lfloor\frac{n}{2}\rfloor +1$ monochromatic subsets with disjoint convex hulls. This is not true if the subsets have a maximum size of 2. They give an algorithm to find a matching of size $\frac{3}{7}n$, but show that a monochromatic matching can in some cases only cover $\frac{94}{95}n$ points. This is in contrast to the classic problem of Putnam that every bichromatic set of $n$ blue and $n$ red points admits a perfect matching of the red and the blue points (\cite{larson1983problemsolving}, for a proof of some generalization, see for instance \cite{akiyama1989disjsimpl}). For both settings, the special case of a maximum size of a matching stabbed by a line with points on a circle has been of both long-term and recent interest under the name "necklace folding problem"\cite{CSOKA2022123,LyPed99}.

\begin{table}[htb!]
	\begin{tabular}{ r | l | l}
		& cover both & cover blue \\[1ex]  
		\hline & \\[-1.5ex]
		\multirow{2}{6em}{disjoint} & $\geq \lfloor \frac{n}{2} \rfloor +1$ (\Cref{prop:colorchanges}) & $\geq\lfloor\frac{n}{4}\rfloor+1$ (\Cref{prop:no blue triangles})\\ [1ex]
		& $\leq \lfloor \frac{n}{2} \rfloor +1$ (\Cref{prop:induct+comp}) & $\leq\lfloor\frac{2}{7} n\rfloor+1$ (\Cref{prop:blue_disjoint})\\[1ex] 
		\hline & \\[-1.5ex]
		\multirow{2}{6em}{overlap}
		& $\geq\frac{3}{8} n-\Oh(1)$ (\Cref{prop:no blue triangles}) & $\geq\lfloor\frac{n}{4}\rfloor+1$ (\Cref{prop:no blue triangles})\\ [1ex]
		& $\leq \frac{13}{30}n +\Oh(1)$ (\Cref{prop:match+comp})  & $\leq\frac{4}{15}n + \Oh(1)$ (\Cref{prop:combined upper bounds})
	\end{tabular}
	\caption{Summary of bounds on triangles separating bichromatic planar point sets of $n$ points in total.}
	\label{table:results}
\end{table}

We show that \nameref{prb:blue_overlap} and \sloppy \nameref{prb:blue_no_overlap} are NP-hard. Then we prove combinatorial bounds on the number of triangles required for the problems in the worst case for $n$ points, see \Cref{table:results}. The paper is organized in the following manner. Section~\ref{sec:np-hard} describes the hardness of the problems by a polynomial time reduction from the \textsc{Planar Monotone 3-SAT} problem. Section~\ref {sec:bounds} presents an overview of the combinatorial results. 
Finally, the paper is concluded in Section~\ref{sec:concl}. For detailed proofs we refer to the Appendix.



\section{NP-hardness}
\label{sec:np-hard}


We present a polynomial-time reduction from the known NP-hard \textsc{Planar Monotone (PM) 3-SAT}~\cite{de2012optimal} to the problems \nameref{prb:blue_overlap} and \nameref{prb:blue_no_overlap} (Problems \ref{prb:blue_overlap} and \ref{prb:blue_no_overlap}). \textsc{PM 3-SAT} is a special version of \textsc{3-SAT}, where the usual boolean formula $\phi$ is in conjunctive normal form, each clause contains 3 literals and all of them are either positive or negative. Moreover, there exists a planar embedding~$\Gamma_\phi$ of the incidence graph of $\phi$, in which the variables lie on the $x$-axis, the clauses with positive literals are above the $x$-axis and the clauses with negative literals are below the $x$-axis.
Given a PM 3-SAT formula~$\phi$, we construct a set of bichromatic points in the plane such that we can decide whether $\phi$ is satisfiable based on the number of triangles in a minimum separator. The condition holds whether the triangles are allowed to overlap or not. Thus we simultaneously address the NP-hardness of both problems \ref{prb:blue_overlap} and \ref{prb:blue_no_overlap}.

\begin{figure}[b]
 \centering
    \includegraphics[width=.5\textwidth]{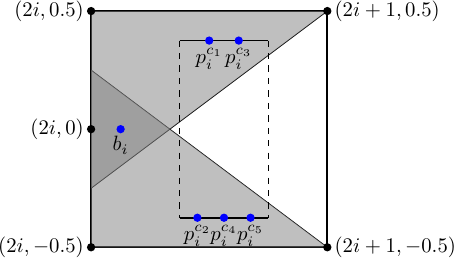}
    \caption{The covering triangles of a variable gadget $x_i$. \robert{The white space between the dashed lines can be used for red points.}}
    \label{fig:var gadget sketch}
\end{figure}

\newcommand{\eps}{\varepsilon}

Here we give an overview of the reduction without presenting the formal details. For the details, see Appendix \ref{sec:ap np-hard}. 
Given a PM 3-SAT formula~$\phi$ with~$k$ variables $x_1,\ldots x_k$ and~$m$ clauses with a corresponding planar embedding~$\Gamma_\phi$ we will construct a bichromatic point set, such that all blue points can be covered with $k+2m$ triangles, if and only if $\phi$ is satisfiable. 
Let $\eps$ be small enough.
First we describe how to replace the variable-vertices of $\Gamma_\phi$ with variable gadgets. The variable $x_i$ is represented by vertices in a square. Inside the square, place two covering triangles intersecting in a small triangle next to the point $(2i,0)$ and place a blue point $b_i$ in their intersection. For every clause $c$ involving $x_i$, place a point $p_{i}^c$ on a top or bottom segment close enough to the middle as in~\Cref{fig:var gadget sketch} making sure that they are ordered according to the order of incidences of the clauses at $x_i$ in $\Gamma_\phi$. Use the top segment if and only if $c$ is positive. Now we replace the vertex associated to the positive clause $c=x_i\vee x_j\vee x_k,i<j<k$ of $\Gamma_\phi$ with 2 blue points $l_c$ and $r_c$, shifted slightly to the left and right, respectively. Place four covering triangles covering each of the pairs of points $(l_c,p_{i}^c),(l_c,p_j^c),(r_c,p_j^c)$ and $(r_c,p_k^c)$ so closely that no other point is inside and triangles only intersect if their pairs do. Then perturb the placed blue points to establish general position. We say that two of the previously placed blue points are \emph{incompatible} if they are not contained in a common covering triangle. For every pair of incompatible points, we place a red point on the segment between them, but not into any covering triangle. If both points are in the same variable gadget, \Cref{fig:var gadget sketch} shows how this is done. If they are not, the segment ``passes through empty space'', that is, the triangles outside of the variable gadgets are thin enough not to contain it completely. In an $\eps$-ball around each blue point, we place $k+2m+1$ extra blue points in such a way that general position of the blue point set is preserved. These points form a so-called \emph{blue point cloud} around the blue point. The $\eps$-ball of every red point $r$ is contained in the convex hull of the $\eps$-balls of its defining blue points, but disconnects it. For every triangle of one point from one of the defining blue point clouds and two of the other, we place a red point inside it and the $\eps$-ball of $r$ and then delete $r$. Thus they are all contained in empty space. We place these points in such a way, that the full point set is in general position. They constitute the \emph{red point cloud} of $r$.

We now show that the thus defined point set can be covered with $k+2m$ triangles if and only if $\phi$ is satisfiable.
Assume that $\phi$ is satisfiable. For every false variable, we use the bottom covering triangle in the variable gadget. For every true variable, we use the top one. For every positive clause $c=x_i\vee x_j\vee x_k$, we use two triangles:
\begin{enumerate}
    \item If $x_i$ is false, we use the triangle covering $(l_c,p_i^c)$.
    \item If $x_k$ is false, we use the triangle covering $(r_c,p_k^c)$. 
    \item If both $x_i$ and $x_k$ are false, $x_j$ is true, so $p_j^c$ has been covered and we are done.
    \item Else if $x_j$ is false we use $(l_c,p_j^c)$ or $(r_c,p_j^c)$, depending on which of the first two triangles was not used, so $l_c$ and $r_c$ are not covered twice.
    \item If we did not cover both $l_c$ and $r_c$ this way, we cover their point clouds individually.
\end{enumerate}
Negative clauses are handled analogously. This covers all blue points with $k+2m$ triangles. Since we only use triangles that are subsets of covering triangles, no triangle contains a red point.

Assume now we found a way to cover the blue points with $k+2m$ triangles. Every point cloud contains more than $k+2m$ points, so some triangle contains at least 2 points of it. This triangle is said to \emph{cover} the cloud. No two incompatible clouds are covered by the same triangle, otherwise it would contain a red point. However the points $l_c,r_c$ for all clauses $c$ and the points $b_i$ for all variables $x_i$ are pairwise incompatible, so we need all $k+2m$ triangles for them, $k$ \emph{variable-covering} and $2m$ \emph{clause-covering}. If the triangle covering $b_i$ covers any $p_i^c$ for a positive clause $c$, set $x_i$ to true, else to false. For any positive clause $c=x_i\vee x_j\vee x_k$, one of the incompatible $p_i^c,p_j^c$ and $p_k^c$ must be covered by the corresponding variable-covering triangle, since no other triangle can cover them and there are only 2 triangles associated with $c$. Hence the clause is fulfilled.
For negative clauses, the argumentation works similarly, since it is impossible that the triangle covering $b_i$ covers the incompatible point clouds $p_i^{c_+},p_i^{c_-}$ for a positive clause $c_+$ and a negative clause $c_-$.

\section{Bounds on the number of triangles}
\label{sec:bounds}


This section shall give you a very rough overview over the techniques used to obtain bounds on the number of triangles needed to separate a bichromatic point set in the worst case. 

For our lower bounds, we use two different point sets: For the case of covering both colors with disjoint triangles, we use points equidistributed along a circle, colored alternatingly with blue and red. This makes sure a triangle can only ever cover three points, however the disjointness of triangles ensures that at least as many triangles as those that reach that bound cover only one point as well. In the disjoint case, this would give a bound of $\frac{n}{3}+\Oh(1)$, but we can do a little bit better by considering a slightly different construction: Here only the blue points are equidistributed along a circle and an almost equal number of red points are placed inside so that any triangle of three blue points contains one of them. This construction gives the $\frac n 4+\Oh(1)$ lower bounds right away, whether the triangles are overlapping or not. Some further investigation reveals that the red points can be chosen such that no five of them can be in a triangle without blue points. This gives the final result of roughly $\frac{1}{2}\frac{n}{2}+\frac{1}{4}\frac{n}{2}=\frac{3}{8}n$ triangles for overlapping triangles covering both point sets.

For the upper bounds, we use a combination of sweeping line algorithms and case distinction at the boundary of the convex hull. In particular, we prove that our lower bound for disjoint covering of all points is tight. We also give bounds for the number of triangles needed to cover all blue points in terms of the number of blue points only, then to the number of red points only in order to arrive at our $\frac{4}{15}n+\Oh(1)$ (overlapping) and $\lfloor\frac{2}{7}n\rfloor+1$ (disjoint) bounds for covering the blue points. Finally, we use computer assistance to deduce that the maximum number of bichromatic points you can place in the plane without two disjoint empty monochromatic triangles is 14. We deduce our $\frac{13}{30}n+\Oh(1)$ bound for covering all points with overlapping disjoint triangles from it.

\section{Conclusion}
\label{sec:concl}
We have addressed the hardness and bounds on the minimum number of triangular separators for a bichromatic point set. Theorem~\ref{thm:NP} proves the NP-hardness of \nameref{prb:blue_no_overlap} but it is questionable whether this problem belongs to NP at all. It might be $\exists \mathbb{R}$-hard~\cite{schaefer2017fixed}.  

\begin{conjecture}
    \nameref{prb:blue_no_overlap} is $\exists \mathbb{R}$-complete.
\end{conjecture}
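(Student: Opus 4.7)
The plan has two parts: establishing membership in $\exists \mathbb{R}$ and proving $\exists \mathbb{R}$-hardness.

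Membership is routine. For an instance with a target of $k$ triangles, introduce $6k$ real variables for the vertices of $k$ candidate triangles. Point-in-triangle containment is a conjunction of three polynomial inequalities (signed side tests). Disjointness of two triangles can be certified by the existence of a separating line, which adds a constant number of real variables per pair and a polynomial condition. The resulting existential first-order formula over the reals has polynomial length, placing the problem in $\exists \mathbb{R}$.

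For hardness, I would reduce from \textsc{ETR-INV}, the standard $\exists \mathbb{R}$-complete problem in which variables lie in a bounded interval and constraints have the forms $x+y=z$ and $x\cdot y=1$. The strategy is to build geometric gadgets in which the \emph{precise} placement of a disjoint covering triangle encodes a real parameter, and interactions between gadgets enforce the arithmetic constraints. A \emph{variable gadget} would consist of a narrow bichromatic rail: a row of blue points flanked by dense walls of red points, designed so that the only way to cover all of the rail's blue points with a single triangle is to place the triangle within a one-parameter family parameterised by its offset along the rail. A \emph{copy gadget} chains two rails through a shared blue point whose position forces the two triangles to agree precisely, propagating the real value. \emph{Addition} is implemented by a three-way junction where red obstacles make the feasible offsets satisfy an affine relation. \emph{Multiplication} is the hardest piece and would plausibly require a von~Staudt-style projective incidence configuration, realised by carefully placed rails whose disjointness constraints geometrically multiply two input offsets.

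The main obstacle is the disjointness condition itself. In typical $\exists \mathbb{R}$-hardness constructions, gadgets communicate through shared incidence points or lines, but here every covering triangle occupies a two-dimensional region, so packing many gadgets in close proximity without accidental interference—while still faithfully enforcing polynomial relations at real-valued precision—requires delicate geometric bookkeeping. A further subtlety is ruling out \emph{spurious} optimal coverings, i.e. ones that achieve the target cardinality using differently shaped or differently placed triangles in a way that bypasses the intended encoding. Overcoming this will likely require uniqueness arguments in the spirit of the blue point-cloud trick from Section~\ref{sec:np-hard}, strengthened to the point that the optimal triangles are forced into a continuous family with no combinatorial slack, so that counting triangles no longer suffices and the real-valued positions themselves must encode a solution to the ETR instance.
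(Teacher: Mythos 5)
This statement is stated in the paper as an open \emph{conjecture}: the authors prove only NP-hardness (Theorem~\ref{thm:NP}) and explicitly remark that it is unclear whether \nameref{prb:blue_no_overlap} even lies in NP, so there is no proof in the paper to compare against. Your membership argument is the one standard piece that does go through: for the decision version with target $k$, the $6k$ vertex coordinates, the sign tests for containment, and a quantifier-free disjointness condition (separating line, or equivalently ``no vertex of one triangle inside the other and no pair of edges crossing'') give a polynomial-size existential sentence over the reals. That half is fine, modulo writing out the orientation-independent point-in-triangle predicate.

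The hardness half, however, is not a proof but a wish list, and the missing pieces are exactly the ones that make this a conjecture. First, nothing in your sketch establishes that a ``rail'' gadget actually pins the covering triangle down to a \emph{one-parameter} family: a long thin triangle covering a row of blue points between red walls retains several continuous degrees of freedom (two endpoints of its long axis, plus its width and skew), and you give no mechanism that collapses these to a single real value that can be read off consistently by neighbouring gadgets. Second, the addition and especially the multiplication gadgets are asserted, not constructed; ``a von~Staudt-style configuration realised by carefully placed rails'' is precisely the step that is hard here, because von~Staudt constructions propagate values through \emph{incidences of lines and points}, whereas your carriers of information are two-dimensional disjoint regions whose boundaries are not forced to pass through any particular point. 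Third, you correctly identify that one must rule out spurious optimal coverings, but the paper's point-cloud trick only forces a combinatorial property (which clouds a triangle covers), which is enough for NP-hardness via counting but says nothing about the real coordinates of the triangle; strengthening it so that the count is tight \emph{and} the surviving freedom encodes a solution of the ETR-INV instance is an unresolved gap, not a detail. As it stands, the proposal is a plausible research programme for the hardness direction, not an argument that establishes it.
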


Table~\ref{table:n<=12} contains our calculations for how many triangles we need to use to cover $n\leq 12$ points in the four different settings.
\begin{table}[htb]
    \centering
    \begin{tabular}{c|c|c|c|c|c|c|c|c|c|c|c|c}
     $n$& 2 &3&4&5&6&7&8&9&10&11&12\\\hline
     one color, overlap & 1&1&2&2&2&2&3&3&3&3&4\\
     one color, disjoint & 1&1&2&2&2&2&3&3&3&3&4\\
     both colors, overlap & 2&2&3&3&4&4&4&5&5&6&6\\
     both colors, disjoint & 2&2&3&3&4&4&5&5&6&6&7\\
    \end{tabular}
    \caption{Results on covering/separating small bichromatic point sets}
    \label{table:n<=12}
\end{table}
Based on these numbers, the OEIS~\cite{OEIS}\todo{reference the corresponding sequences} suggests the following formulas for the overlap setting, one of which coincides with our previously proved lower bound: 

\begin{conjecture}
    Every planar bichromatic set of $n$ points in general position can be covered by at most $\lfloor\frac{2}{5}(n+4)\rfloor$ monochromatic triangles.
\end{conjecture}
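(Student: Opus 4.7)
The plan is to prove the bound by strong induction on $n$, with the base cases $n \leq 12$ verified in Table~\ref{table:n<=12}. The inductive step would peel off a block of $5$ points that can be covered by $2$ monochromatic triangles whose intersection with $P$ is exactly these $5$ points. The arithmetic is clean: $\lfloor\frac{2}{5}(n+4)\rfloor - \lfloor\frac{2}{5}((n-5)+4)\rfloor = 2$, so each $5$-point peel contributes exactly the two allotted triangles and the remaining $n-5$ points are handled by the induction hypothesis.

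To find such a block, I would use a sweep-line argument. Consider the five leftmost points of $P$ (by $x$-coordinate). The empty left half-plane provides enough room to construct triangles with one vertex far to the left and two vertices just past the fifth leftmost point, thereby isolating them from the rest of $P$. For most color patterns on these five points -- those where the two color classes can be separated by a curve extending leftwards -- two monochromatic triangles suffice. In particular, one would first handle the \emph{good} patterns such as $BBBRR$, $BBRRR$, $BBBBR$, and those where the majority and minority classes are vertically consecutive, and then treat the \emph{bad} interleaved patterns (like $BRBRB$) separately.

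The main obstacle is these bad interleaved patterns, in which no single monochromatic triangle can cover all majority-color points without enclosing minority-color ones, and so three triangles are needed for five points. To salvage the induction, I envisage one of the following strategies: (i) \emph{amortization with variable blocks}, peeling a prefix of size $k \in \{5, \ldots, 14\}$ and showing that some prefix always yields an average rate of at most $2/5$; (ii) \emph{adaptive sweep direction}, using a rotational or Ham-Sandwich-style argument to select a direction in which the five extremal points avoid a bad pattern; or (iii) a \emph{stronger structural lemma} -- perhaps leveraging the $14$-point bound on two disjoint empty monochromatic triangles already used in the paper -- guaranteeing a $2$-triangle, $5$-point block in any sufficiently large bichromatic set.

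The hardest part will be the analysis of the bad patterns: proving that any interleaved run can either be avoided by choosing the right sweep direction, or absorbed into a larger block with overall ratio $2/5$. The closeness of the current bound $\frac{13}{30}n + \Oh(1)$ to the conjectured $\frac{2}{5}n + \Oh(1)$ suggests that the existing sweep-and-case-analysis proof needs only a modest refinement, most likely a finer charging scheme that pairs each expensive interleaved block with neighboring cheap ones.
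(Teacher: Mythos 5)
There is a genuine gap, and it is worth being clear about its nature: the statement you are proving is one of the paper's open \emph{conjectures}, stated in the conclusion on the basis of computer data for $n\le 12$ and an OEIS match. The paper offers no proof of it; its best proved upper bound for this setting is $\frac{13}{30}n+\Oh(1)$ (\Cref{prop:match+comp}). Your submission is accordingly a research plan rather than a proof: the inductive frame and the arithmetic $\lfloor\frac{2}{5}(n+4)\rfloor-\lfloor\frac{2}{5}(n-1)\rfloor=2$ are fine, and peeling a vertically separated prefix is legitimate (after shrinking the remainder's triangles into the open right half-plane so they cannot swallow peeled points of the wrong color), but the entire difficulty is concentrated in the step you defer, namely the ``bad'' interleaved five-point blocks, and for that step you list three candidate strategies without carrying out any of them.

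Concretely, the gap cannot be closed by your option (ii) (choosing a better sweep direction): the construction in \Cref{prop:no blue triangles} places $b$ blue points on a circle with $b-2$ red points inside so that \emph{no} triangle contains three blue points without containing a red one. In that point set every five-point window, in every sweep direction, is ``bad'' in your sense --- three of its blue points can never share a triangle --- so a $5$-points-per-$2$-triangles peel is impossible anywhere. That set still satisfies the conjecture (it needs only about $\frac{3}{8}n<\frac{2}{5}n$ triangles, because its red points are cheap to cover four at a time), which shows that the proof must be a global charging argument trading expensive blue coverage against cheap red coverage, i.e.\ exactly your options (i)/(iii). That amortization is the open combinatorial content of the conjecture, and it is absent from your write-up; until it is supplied, the argument does not establish the bound for any $n>12$.
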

In the following conjecture the blue triangles could even possibly be chosen to be disjoint.
\begin{conjecture}
    The blue points in every planar bichromatic set of $n$ points in general position can be covered by at most $\lfloor\frac{n}{4}\rfloor+1$ blue triangles.
\end{conjecture}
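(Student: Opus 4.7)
The plan is to argue by strong induction on $n = b + r$, where $b$ and $r$ count blue and red points respectively. If $b \le \lfloor n/4 \rfloor + 1$, then placing one tiny triangle around each blue point already suffices, so we may assume $b > \lfloor n/4 \rfloor + 1$, which forces $b > r$ up to an additive constant. In this regime, the inductive step must produce a single triangle that accounts for at least $4$ points of $P$ before recursing.

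The heart of the argument is to locate a triangle $T$ of one of two types. An \emph{edge-type} triangle is a thin sliver along a segment between two blue points that contains no red point; it covers $2$ blues, and we charge $2$ red points to it. A \emph{face-type} triangle is an empty blue triangle, with three blue corners and no red in the interior; it covers $3$ blues, and we charge $1$ red to it. In either case, $T$ removes $4$ points from further consideration, and the induction hypothesis on the residual $n-4$ points gives $\lfloor (n-4)/4 \rfloor + 1 + 1 = \lfloor n/4 \rfloor + 1$ triangles in total.

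To produce such a $T$, I would peel the convex hull of the blue subset. Its boundary supplies many candidate edge-type slivers (along consecutive blue hull edges) and face-type ``ears'' (three consecutive blue hull vertices). A sublemma, in the spirit of the counting in the commented-out proof of~\Cref{prop:no blue triangles}, should show that in the $b > r$ regime at least one such candidate is red-free; moreover, its charged reds can be chosen to lie in a bounded region that subsequent triangles need not touch, so that no red is double-charged. When no ear is available, one falls back to sweeping from an extreme blue vertex.

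The main obstacle is precisely this non-double-charging of red points. Red points cannot literally be deleted, since later triangles must continue to avoid them, and the extremal construction from the lower bound places reds so cleverly that \emph{every} triangle spanned by three blues contains one: there, face-type triangles are essentially forbidden, and the entire bound must be realised by edge-type triangles, each of which must consume two distinct reds. Matching each edge-type triangle with two reds globally, presumably by walking along the blue convex hull in circular order and assigning to each blue--blue hull edge the red points strictly inside the ``ear'' it cuts off, is the combinatorial bookkeeping step that would close the current $\lfloor 2n/7\rfloor + 1$ upper bound down to $\lfloor n/4\rfloor + 1$, and is where I expect the argument to be most delicate.
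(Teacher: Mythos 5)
The statement you set out to prove is stated in the paper only as a conjecture; there is no proof to compare against, and the paragraph immediately following it explains why the authors could not prove it. Your proposal does not get past that obstruction, and both steps you yourself flag as ``delicate'' are genuine gaps; there is also a third, structural one. Most importantly, your scheme only ever produces triangles covering $2$ or $3$ blue points. If every triangle covers at most $3$ blue points you need at least $b/3$ of them, and $b/3 \le \lfloor (b+r)/4\rfloor + 1$ fails once $b > 3r + \Oh(1)$ --- which is exactly the regime your reduction leaves you in. So edge-type and face-type triangles alone cannot reach the bound: some triangles must cover four or more blue points, and, as the paper's conclusion points out, a triangle covering four blue points in convex position necessarily leaves their convex hull, which destroys the locality your recursion relies on. Quantitatively, \Cref{lem:b-3t} and \Cref{cor:greedily covering triangles} give roughly $(2b+r)/6$ triangles, which matches $(b+r)/4$ only when $b\le r$; to close the gap for $b>r$ you would need about $(b-r)/2$ rather than $(b-r)/3$ vertex-disjoint empty blue triangles, and nothing in your sketch produces them.

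Second, the induction is unsound as stated: you recurse on the $n-4$ points left after deleting a triangle's blue vertices together with its ``charged'' red points, but red points cannot be deleted --- every subsequent triangle must still avoid them, so the induction hypothesis applied to the residual set says nothing about the original instance. You acknowledge this but supply no mechanism (for instance, a separation invariant of the kind used in the proof of \Cref{prop:induct+comp}) confining the charged reds to a region that later triangles never enter; in the lower-bound construction of \Cref{prop:no blue triangles} the reds sit deep inside the blue convex hull precisely so that no such confinement exists. Finally, the existence sublemma (``some hull ear or sliver is red-free, and its charged reds are never double-charged'') is asserted, not proved. As it stands this is a research plan rather than a proof, and the conjecture remains open.
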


The main roadblock to proving this conjecture is that in a set of predominantly blue points, we need some triangles to cover at least 4 blue points. However a triangle that covers 4 blue points in convex position cannot be chosen inside the convex hull of those points. Therefore a similar approach as in the proof of the $\lfloor\frac{n}{2}\rfloor+1$ bound seems out of reach for now.


\bibliography{bibliography.bib}

\newpage 
\appendix

\section{The formal NP-Hardness proof}\label{sec:ap np-hard}

\begin{figure}
\begin{minipage}[t]{.45\textwidth}
    \centering
    \includegraphics[width=\textwidth]{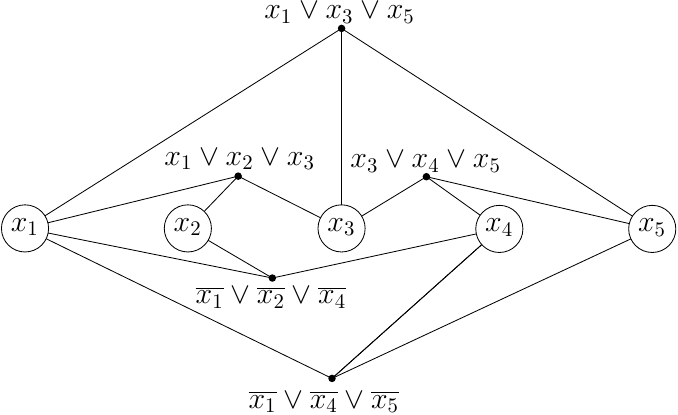}
    \caption{Planar embedding of the PM-3SAT formula $\phi$.
    }
\end{minipage}%
\hspace{.1in}
\begin{minipage}[t]{.45\textwidth}
 \centering
    \includegraphics[width=\textwidth]{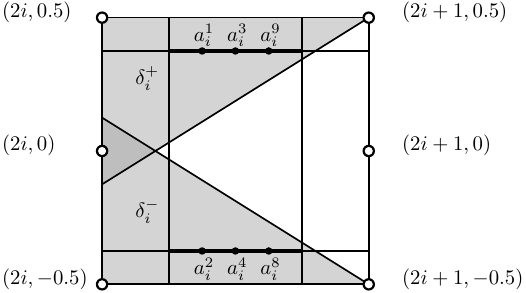}
    \caption{The covering triangles of a variable gadget $x_i$.}
    \label{fig:variable-covering-triang}
\end{minipage}
\end{figure}

\subsection{Construction of the Variable Gadgets.}
Let $x_1, x_2, \dots x_k$ be the variables ordered as they occur left to right in~$\Gamma_\phi$.
For a variable~$x_i$, let~$k_i^+$ and~$k_i^-$ denote the number of positive and negative occurrences of~$x_i$, respectively.
To construct the variable gadget for variable~$x_i$, we construct two covering triangles in the following way, as depicted in \Cref{fig:variable-covering-triang}. We recall that a covering triangle for a set of points refers to the triangle which contains that set of points and does not contain any other point.


The two covering triangles for variable $x_i$ are placed in an axis-parallel square with unit-length, that has its center point at $(2i+0.5,0)$.
We construct the \emph{positive} covering triangle~$\delta_i^+$ which is spanned by the points $(2i, -0.125), (2i, 0.5), (2i+1, 0.5)$; and the \emph{negative} covering triangle $\delta_i^-$ which is spanned by the points $(2i, +0.125), (2i, -0.5), (2i+1, -0.5)$. Note that~$\delta_i^+$ and~$\delta_i^-$ have a non-empty intersection that is to the left of the vertical line defined by~$x=2i+0.25$. We place a point cloud called $V_i$, centered at the blue point $b_i$, referring it as the \emph{defining} point cloud of $x_i$.
We now consider the line segment $x\in[2i+0.25, 2i+0.75],y=0.375$, divide it evenly into $k_i^+ +1$ segments and denote the newly formed $k_i^+$ inner end points of the segments as $x_i$'s positive \emph{anchor} points. Similarly, we divide the line segment $x\in[2i+0.25, 2i+0.75],y=-0.375$ into $k_i^-+1$ into $k_i^- + 1$ segments and denote the newly formed $k_i^-$ inner end points of the segments as $x_i$'s negative anchor points. 
Note that the anchor points will not be blue points, but only help in the construction of the clause gadget.
From left to right, we assign the positive (negative) anchor points of $x_i$ to the positive (negative) clauses in which $x_i$ is contained in the same order in which the corresponding edges from the clauses appear in the cyclic clockwise (counterclockwise) order around $x_i$ in $\Gamma_\phi$. If $x_i$ is contained in a clause $c_j$, we name the anchor point of $x_i$ that is assigned to $c_j$ to be $a_i^j$, referring it as the \emph{variable anchor point} of $c_j$ in $x_i$.
Further we say the right and left neighbour of an anchor point $a_i^j$ are the right and left anchor of $x_i$ on $x\in[2i+0.25, 2i+0.75],y=+0.375$ if $c_j$ is positive or $x\in[2i+0.25, 2i+0.75],y=-0.375$ if $c_j$ is negative. If such a left anchor point does not exist, we say the left neighbour is $(2i+0.25,\pm 0.375)$, and if such a right neighbour does not exist, we say the right neighbour is $(2i+0.75,\pm 0.375)$.

Note that the positive and negative anchor points are contained completely in the inner part of $\delta_i^+$ and $\delta_i^-$ respectively, and all anchor points of $x_i$ are to the right of the intersection of $\delta_i^+$ and $\delta_i^-$.

\subsection{Construction of the Clause Gadgets.}
We only describe the construction of the positive clause gadgets, since the construction of the negative ones is analogue.
During the construction, we will keep the invariant that all covering triangles belonging to a clause gadget of a clause $c$ are contained in the vertical strip bounded by the $x$-coordinates of the leftmost and rightmost variable anchor point of $c$.

Given two positive clauses $c_i, c_j$ with $c_j = (x_t \vee x_u \vee x_v)$ and $t<u<v$, we say $c_i$ is \emph{nested} into $c_j$, if in $\Gamma_\phi$ the vertex corresponding to $c_i$ is drawn inside the area bounded by the $x$-axis and the edges $\{c_j,x_t\}$ and $\{c_j, x_v\}$.
Denote by $m^+$ the number of positive clauses and let $c_1, c_2, \dots, c_{m^+}$ be the positive clauses ordered in such a way that $i<j$ for every clause $c_i$ nested into a clause $c_j$. 
%
%
Let us assume inductively that we already constructed the clause gadgets of $c_1, \dots, c_{j-1}$. Let $C^j_\mathrm{nested}\subseteq \{c_1, \dots, c_{j-1}\}$ be the set of clauses nested into $c_j$. 
Let $y_j$ be one plus the height of a highest point of a covering triangle in the construction of $C^j_\mathrm{nested}$. If $C^j_\mathrm{nested} = \varnothing$, we set $y_j = 1.5$ which corresponds to the height of the variable gadgets plus one.
We will now construct a total of four covering triangles $\triangle_{j,t}^\mathrm{left}, \triangle_{j,u}^\mathrm{left}, 
\triangle_{j,u}^\mathrm{right}, \triangle_{j,v}^\mathrm{right}, 
$ that together form the clause gadget. 
Moreover, for description brevity we denote by $\triangle_{j}^\mathrm{left}$ and $\triangle_{j}^\mathrm{right}$, 
the intersection of $\triangle_{j,t}^\mathrm{left}, \triangle_{j,u}^\mathrm{left}$ and $\triangle_{j,u}^\mathrm{right}, \triangle_{j,v}^\mathrm{right}$, respectively, as shown in \cref{fig:clause-triang-overview}.
In order to construct the triangles, we will first start with constructing cones for each covering triangle into which we want to place the triangles and some minimum heights.

\begin{figure}
    \centering
    \includegraphics{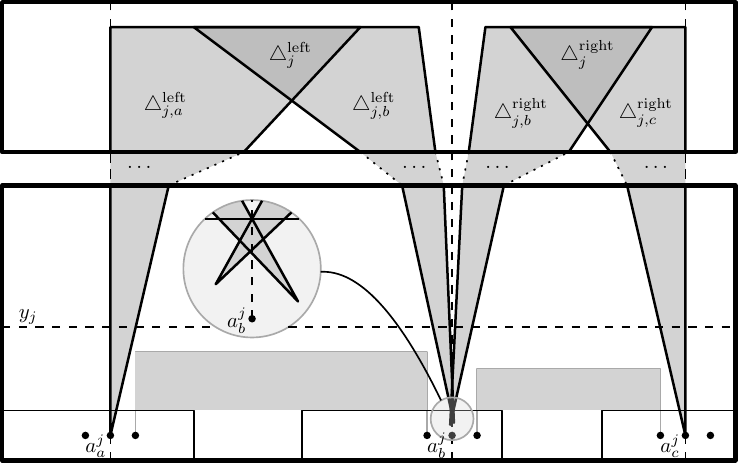}
    \caption{An overview of the covering triangles for clause $c_j$.}
    \label{fig:clause-triang-overview}
\end{figure}

\begin{enumerate}
    \item Construction of the cone for $\triangle_{j,t}^\mathrm{left}$: Let $a_t^j$ be the variable anchor point of clause $c_j$ in $x_a$.
    We define the cone $\Cone_{j,t}^\mathrm{left}$ to have origin~$a_t^j$; as its left boundary the vertical upper half-line starting from $a_t^j$; and as its right boundary the half-line starting from $a_t^j$ and going through the cut of the vertical line through the anchor point that is the right anchor point neighbour of $a_t^j$, and the horizontal line through $y_j$.
    %
    This cone ensures that the covering triangle lying inside it does not intersect any of the covering triangle corresponding to the clauses nested under it.
    Additionally, we also construct two minimum heights, which will later ensure that every line from a point in $\triangle^\mathrm{left}_j$ to a point in $V_t$ or some point cloud to the left of $a_t^j$ will have some segment outside of any covering triangle, which will help in the placement of obstacles later. 
    
    The first minimum height is~$\overline{h}_{j,t}^\mathrm{left}$: 
    Let~$w$ be the point with~$y=1$ above~$a_t^j$. For any point~$p\in V_t$, we construct the line~$\ell_p$ through~$w$ and~$p$. Set $\ell$ to be the $\ell_p$ with the highest slope.
    Then~$\overline{h}_{j,t}^\mathrm{left}$ is the height of the highest intersection point of~$\ell$ with the right boundary of~$\Cone_{j,t}^\mathrm{left}$. Note that such an intersection point may not exist if the slope of~$\ell$ is not lower than the slope of the right boundary of~$\Cone_{j,t}^\mathrm{left}$. In this case, we change the slope of~$\Cone_{j,t}^\mathrm{left}$ until it is a bit higher than the slope of~$\ell$. This makes the cone~$\Cone_{j,t}^\mathrm{left}$ only smaller, so all further statements related to properties of~$\Cone_{j,t}^\mathrm{left}$ remain unaffected.
    The second minimum height is~$\hat{h}_{j,t}^\mathrm{left}$. Let again~$w$ be the point with $y=1$ above~$a_t^j$, and let~$\ell$ be the line through the left neighbor of~$a_t^j$, and~$w$. Then~$\hat{h}_{j,t}^\mathrm{left}$ is the height of the highest intersection point of~$\ell$ with the right boundary of~$\Cone_{j,t}^\mathrm{left}$. Note that such an intersection point always exists, since the slope of this~$\ell$ is always lower than the slope of the right boundary of~$\Cone_{j,t}^\mathrm{left}$.  

    \item Construction of the cone for $\triangle_{j,u}^\mathrm{left}$: Let $a_u^j$ be the variable anchor point of clause $c_j$ in 
    $x_b$, and let $\overline{a}_u^j$ be the point that lies on the vertical line through $a_u^j$, in the middle between $a_u^j$ and where the vertical line through $a_u^j$ cuts the horizontal line $y=1$.  
    We define the cone $\Cone_{j,u}^\mathrm{left}$ to have as its origin~$\overline{a}_u^j$ 
    ; as its right boundary it has the vertical upper half line starting from $\overline{a}_u^j$, and as its left boundary it has the half line starting from $\overline{a}_u^j$ and going through the cut of the vertical line through the anchor point that is the {left anchor neighbour of $a_u^j$}, and the horizontal line through $y_j$.
    \todo{\felix{note to self: resume here}}
    Note that we do not need an analogue to the second minimum height~$\overline{h}_{j,t}^\mathrm{left}$ of (1) here, since this one will only be relevant for certain types of interactions with the variable defining point cloud, which is only on the left side of the anchor points in every gadget.
    However, we again construct a minimum height~$\hat{h}_{j,u}^\mathrm{left}$.
    Let~$w$ be the point with~$y=1$ above~$a_u^j$, and let~$\ell$ be line through the right neighbour of~$a_u^j$, and~$w$. Then~$\hat{h}_{j,u}^\mathrm{left}$ is the height of the highest intersection point of~$\ell$ with the left boundary of~$\Cone_{j,u}^\mathrm{left}$. Note that such an intersection point always exists, since the slope of this~$\ell$ is always lower than the slope of the left boundary of~$\Cone_{j,t}^\mathrm{left}$.

    \item Construction of the cone for $\triangle_{j,u}^\mathrm{right}$:
    Construct $\Cone_{j,u}^\mathrm{right}$ vertically mirrored to the construction of $\Cone_{j,u}^\mathrm{left}$.    
    Construct the minimum height~$\overline{h}_{j,u}^\mathrm{right}$ analogue (not vertically mirrored) to the construction of~$\overline{h}_{j,t}^\mathrm{left}$.
    Construct the minimum height~$\hat{h}_{j,u}^\mathrm{right}$ vertically mirrored to the construction of~$\hat{h}_{j,u}^\mathrm{left}$.

    \item Construction of the cone for $\triangle_{j,v}^\mathrm{right}$:
    Construct $\Cone_{j,v}^\mathrm{right}$ vertically mirrored to the construction of $\Cone_{j,t}^\mathrm{left}$.
    As in (2), we do not need an analogue of minimum height~$\overline{h}_{j,t}^\mathrm{left}$.
    Construct the minimum height $\hat{h}_{j,v}^\mathrm{right}$ vertically mirrored to the construction of $\hat{h}_{j,t}^\mathrm{left}$.
\end{enumerate}

Given the cones $\Cone_{j,t}^\mathrm{left}, \Cone_{j,u}^\mathrm{left}, \Cone_{j,u}^\mathrm{right}$ and $\Cone_{j,v}^\mathrm{right}$, and the corresponding minimum heights, we now construct the covering triangles.
Consider the lowest point $w$ contained both in $\Cone_{j,t}^\mathrm{left}$ and in $\Cone_{j,u}^\mathrm{left}$. If~$w$ lies in the vertical strip bounded by the right neighbour of~$a_t^j$ and the left neighbour of~$a_u^j$, and if it lies above all of the three minimum heights $\overline{h}_{j,t}^\mathrm{left}$, $\hat{h}_{j,t}^\mathrm{left}$ and $\hat{h}_{j,u}^\mathrm{left}$, then we set $i_j^\mathrm{left}$ to be~$w$. 
Otherwise, let $i_j^\mathrm{left}$ be the lowest point in the intersection of $\Cone_{j,t}^\mathrm{left}$ and  $\Cone_{j,u}^\mathrm{left}$ that fulfills these conditions. 
Let $\ell_\mathrm{upper}^\mathrm{left}$ be the horizontal line one unit above $i_j^\mathrm{left}$.
Let the raw triangle $\tilde{\triangle}_{j,t}^\mathrm{left}$ be defined to be the intersection of $\Cone_{j,t}^\mathrm{left}$ with the half plane bounded above by $\ell_\mathrm{upper}^\mathrm{left}$.
We obtain the refined triangle~$\triangle_{j,t}^\mathrm{left}$ out of~$\tilde{\triangle}_{j,t}^\mathrm{left}$ by moving the upper right corner along~$\ell_\mathrm{upper}^\mathrm{left}$ to the left until it is above the left neighbour of~$a_u^j$. (If the upper right corner of~$\tilde{\triangle}_{j,t}^\mathrm{left}$ is already above or to the left of~$a_u^j$, there is nothing to do.)

Let the raw triangle $\tilde{\triangle}_{j,u}^\mathrm{left}$ be defined to be the intersection of~$\Cone_{j,u}^\mathrm{left}$ with $\ell_\mathrm{upper}^\mathrm{left}$. We define $r$ to be the point with $y=0.5$ above $a_u^j$. We define the line $\lambda$ to be the line through $a_u^j$, and the point with $y=0.5$ that has its $x$-coordinate directly between~$a_u^j$ and the right neighbour of~$a_u^j$.
We obtain the refined triangle~$\triangle_{j,u}^\mathrm{left}$ out of~$\tilde{\triangle}_{j,u}^\mathrm{left}$ in the following way: First, we move the upper left corner along $\ell_\mathrm{upper}^\mathrm{left}$ to the right until it is above the right neighbor point of~$a_t^j$ if need be. Then, we rotate the right (vertical) side of ~$\tilde{\triangle}_{j,u}^\mathrm{left}$ in counterclockwise direction around~$r$ while keeping the other two sides on the same lines, until one of the following two events happens:
(i) The upper right corner is on the vertical line through the left neighbour of $a_u^j$; or,
(ii) the lower corner is on $\lambda$.
The result is the desired triangle $\triangle_{j,u}^\mathrm{left}$.

Construct~$\triangle_{j,u}^\mathrm{right}$ and $\triangle_{j,v}^\mathrm{right}$ in an analogue way (vertically mirrored) out of $\Cone_{j,u}^\mathrm{right}$, $ \Cone_{j,v}^\mathrm{right}$ and the three minimum heights for the right side. 

Since both $\triangle_{j,u}^\mathrm{left}$ and $\triangle_{j,u}^\mathrm{right}$ intersect the vertical line $\ell_u^j$ through~$a_u^j$ only inside the positive defining triangle~$\delta_u^+$ of~$x_u$, and since both contain the upper intersection point $r$ of $\ell_u^j$ and the boundary of $\delta_u^+$, the two triangles~$\triangle_{j,u}^\mathrm{left}$ and~$\triangle_{j,u}^\mathrm{right}$ have a non-empty intersection, and this intersection is contained completely in~$\delta_u^+$.

We now place point clouds for $c_j$ in the following way:
In $\triangle_{j}^\mathrm{left}$ and $\triangle_{j}^\mathrm{right}$ we place a point cloud each which we denote $P_{j}^\mathrm{left}$ and $P_{j}^\mathrm{right}$.
We further place a point cloud $O_{j,t}$ in the cut of $\delta_t^+$ and $\triangle_{j,t}^\mathrm{left}$; a point cloud $O_{j,u}$ in the cut of $\delta_u^+$, $\triangle_{j,u}^\mathrm{left}$ and $\triangle_{j,u}^\mathrm{right}$; and a point cloud $O_{j,v}$ in the cut of $\delta_v^+$ and $\triangle_{j,v}^\mathrm{right}$.

\subsection{Placement of Obstacles.}
We now place obstacles to ensure that certain pairs of point clouds cannot be covered by the same triangle in any solution. More precisely, we say two point clouds $P$ and $Q$ are \emph{incompatible} with each other, if in any solution a triangle that contains at least two points out of $P$ can not contain a single point of $Q$ and vice versa.

Let $P$ and $Q$ be point clouds. A \emph{$PQ$-triple} are three points $p, p', p^* \in P\cup Q$ with $\{p, p', p^*\} \cap P \neq \varnothing$ and $\{p, p', p^*\} \cap Q \neq \varnothing$. To make two point clouds $P, Q$ incompatible with each other, we place an obstacle in the triangle spanned by $p, p', p^*$ for every $PQ$-triples~$p, p', p^*$. Since no obstacle is allowed to lie inside some covering triangle, we need to show that the interior of the triangle spanned by $p, p', q$ contains at least one point (in the following called \emph{outside point}) that is not contained in any covering triangle. Since the covering triangles are closed areas, the area without the covering triangles is open, thus we can still guarantee that the obstacles can be placed in general position.

We make all pairs of point clouds $P, Q$ incompatible, for which there exists no covering triangle that contains both $P$ and $Q$. Lemma \ref{lem:insert-obstacles} assures that this can always be done.

\begin{restatable}{lemma}{lemInsertObstacles}
    \label{lem:insert-obstacles}
    Given the above construction and let $m_b$ be the number of blue points it contains. It is possible to insert a set of $\mathcal{O}(m_b^3)$ obstacles in such a way, that all pairs of point clouds~$P, Q$, for which there exists no covering triangle that contains both~$P$ and~$Q$, are incompatible.
\end{restatable}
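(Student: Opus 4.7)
The plan is to enumerate, for every ordered triple of blue points $\{p,p',p^*\}$ in the construction such that at least two of them lie in distinct incompatible clouds $P$ and $Q$, and drop one red obstacle inside the triangle they span. Since the whole construction contains $m_b$ blue points in total, there are at most $\binom{m_b}{3}=\Oh(m_b^3)$ such triples, which immediately yields the announced count. What remains is to establish geometric feasibility: for each relevant triple, the interior of $\conv(p,p',p^*)$ must meet the complement of the union of covering triangles; once such a point is known to exist, we pick it, perturb slightly, and get an obstacle lying in an open set that is disjoint from all covering triangles and in general position with everything placed so far. Because only finitely many obstacles are inserted, the perturbation step is routine.

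The feasibility argument proceeds by case analysis on the structural type of the incompatible pair $(P,Q)$. The relevant types are: (i) two defining clouds $V_i,V_j$ of distinct variables; (ii) a defining cloud $V_t$ paired with a clause cloud $P_j^{\mathrm{left}}$, $P_j^{\mathrm{right}}$, or an overlap cloud $O_{j,s}$ that is not jointly covered by any of the four clause triangles $\triangle_{j,t}^{\mathrm{left}},\triangle_{j,u}^{\mathrm{left}},\triangle_{j,u}^{\mathrm{right}},\triangle_{j,v}^{\mathrm{right}}$ or by $\delta_t^\pm$; (iii) two clause clouds in the same clause gadget that the four covering triangles of that gadget were designed to keep apart, such as $P_j^{\mathrm{left}}$ with $O_{j,v}$; and (iv) clouds drawn from two different clause gadgets, where the ordering $c_1,\dots,c_{m^+}$ by nesting and the invariant that each gadget lies in the vertical strip of its extremal anchor points leaves explicit pockets of free space. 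In each case one exhibits an open region that is disjoint from every covering triangle and is crossed by every line segment joining a point of $P$ to one of $Q$; since at least two sides of $\conv(p,p',p^*)$ cross this region, convexity furnishes a non-empty open subregion inside the triangle where the obstacle can be placed.

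The delicate part, which I expect to be the main obstacle, is the case in which $V_t$ is incompatible with a clause cloud of a clause $c_j$ that actually contains $x_t$, for example $V_t$ and $P_j^{\mathrm{right}}$. Here the covering triangle $\triangle_{j,t}^{\mathrm{left}}$ lies close to $\delta_t^\pm$, so a priori the segment between $V_t$ and $P_j^{\mathrm{right}}$ might stay inside $\delta_t^\pm \cup \triangle_{j,t}^{\mathrm{left}} \cup \triangle_{j,u}^{\mathrm{left}} \cup \triangle_{j,u}^{\mathrm{right}}$ all the way across. This is exactly what the minimum heights were engineered to prevent: $\overline{h}_{j,t}^{\mathrm{left}}$ forces the upper edge of $\triangle_{j,t}^{\mathrm{left}}$ above the steepest line from a point of $V_t$ through the auxiliary point $w$ that sits at height $1$ above $a_t^j$, so every segment joining $V_t$ to $P_j^{\mathrm{right}}$ that exits $\delta_t^\pm$ above $w$ enters open space before it can enter any clause covering triangle; $\hat{h}_{j,t}^{\mathrm{left}}$ supplies the analogous guarantee for segments that leave through the neighbouring anchor side, and the symmetric heights $\overline{h}_{j,v}^{\mathrm{right}},\hat{h}_{j,v}^{\mathrm{right}}$ handle the mirror situation on the right. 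Combining these guarantees with the convexity observation from the previous paragraph exhibits an outside point inside $\conv(p,p',p^*)$ for every such triple, completing the proof.
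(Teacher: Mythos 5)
Your overall strategy coincides with the paper's: one obstacle per ``bad'' triple of blue points, giving the $\binom{m_b}{3}=\Oh(m_b^3)$ count for free, followed by a case analysis showing that every such triple's triangle contains a point outside all covering triangles, with the engineered minimum heights resolving the hardest case (a variable's defining cloud against a clause cloud of a clause actually containing that variable). You also correctly identify that case as the crux. Two small cautions: the paper restricts to $PQ$-triples, i.e.\ all three points lie in $P\cup Q$ with both clouds represented, whereas your formulation (``at least two of them lie in distinct incompatible clouds'') admits triples whose third point is in an arbitrary cloud; for those the spanned triangle need not avoid the covering triangles, so the feasibility claim would fail as stated. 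Also, the height $\overline{h}_{j,v}^{\mathrm{right}}$ you invoke for the mirror situation is one the construction deliberately does not build (only the two \emph{inner} cones carry an $\overline{h}$); the paper instead reduces the $i=a$, $P=P_j^{\mathrm{right}}$ subcase to the ``clause not containing the variable'' case.

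The genuine gap is exhaustiveness, which is the entire substance of the lemma. The paper's proof enumerates ten families of incompatible pairs, and several of them do not fall under your four headings: two \emph{occurrence} clouds of a positive and a negative clause inside the \emph{same} variable gadget (resolved via the $x$-axis segment); an occurrence cloud $O_{l,i}$ paired with a defining cloud of a \emph{different} clause $c_j$ that also contains $x_i$, split by whether $c_j$ and $c_l$ have the same sign and whether $c_l$ is nested in $c_j$ (this is where the $\hat h$ heights, which you do not use, are actually needed); and the pair $P_j^{\mathrm{left}}, P_j^{\mathrm{right}}$ of the same clause, separated by the vertical line through the middle anchor $a_b^j$. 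Your categories (i)--(iv) cover roughly the easy half of the enumeration; without the remaining families the incompatibility structure needed in Lemma~\ref{lem:NP} (pairwise incompatibility of all of $V_i$, $P_j^{\mathrm{left}}$, $P_j^{\mathrm{right}}$, and of the three occurrence clouds of each clause) is not established, so the reduction would not go through.
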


\begin{proof}
We make all pairs of point clouds $P, Q$ incompatible, for which there exists no covering triangle that contains both~$P$ and~$Q$. The following cases can appear:
\begin{enumerate}
    \item $P$ and~$Q$ are occurrence point clouds of two different clauses in the same variable gadget, such that one of the clauses is positive and one of the clauses is negative.
    
    Then, for every $PQ$-triple~$p, p', p^*$ the triangle spanned by the triple contains in it's interior an open subinterval of the $x$-axis that is not contained in any covering triangle, which gives us an outside point.
    
    \item $P$ and~$Q$ correspond to point clouds in gadgets of different variables~$x_i, x_j$, where~$P$~($Q$) can (independent of each other) be the defining point cloud $V_i$ ($V_j$) or an occurrence point cloud of $x_i$ ($x_j$).

    Then, for every $PQ$-triple~$p, p', p^*$ the triangle spanned by the triple contains in it's interior a point in the space between two variable gadgets, which gives us an outside point.
    
    \item $Q$ is either the defining or some occurrence point cloud of a variable~$x_i$; and~$P$ is a defining point cloud of a clause gadget~$c_j$ that does not contain the variable~$x_i$.  
    
    Then, for every $PQ$-triple~$p, p', p^*$ the triangle spanned by the triple contains in it's interior an (open) line segment~$\ell$ from the convex hull of~$P$ to the inside of the convex hull of~$Q$, which cuts a defining triangle of~$Q$ and a defining triangle of~$P$, which are distinct by construction. 
    If the point where~$\ell$ leaves the last defining covering triangle of~$c_j$ is not contained in any other covering triangle, we have found an outside point.
    If not, the covering triangle must belong to a variable~$x_l$ with~$l \neq i$. In this case, the rest of~$\ell$ is contained only in the horizontal strip in which the variables gadgets are contained, and it must leave the unit square of variable~$x_l$ somewhere, in the vicinity of which we find an outside point.
    
    \item $Q$ is the defining point cloud $V_i$ of a variable $x_i$; and $P$ is a defining point cloud of a clause gadget $c_j$ that contains $x_i$.
    
    Assume without loss of generality that $c_j$ is a positive clause with $c_j = (x_a \vee x_b \vee x_c), a<b<c$.
    For every $PQ$-triple~$p, p', p^*$ the triangle spanned by the triple contains in it's interior an (open) line segment $\ell$ from the convex hull of $P$ to the convex hull of $Q$. Let $w$ be the point on the the boundary of the square containing the variable gadget of $x_i$ that is cut by $\ell$.
    If there exists an outside point directly before $w$ on $\ell$ we are done.
    
    Thus, assume that this is not the case. Then, $w$ is contained in the segment $x\in (2i+0.25, 20+0.75), y=0.5$, and we have one of the three cases: Case (i): $i = a$ and $P = P_j^\mathrm{left}$; Case (ii): $i = a$ and $P = P_j^\mathrm{right}$; or Case (iii): $i = b$ and $P = P_j^\mathrm{right}$
    This is exhaustive, since if it were none of these cases, then $\ell$ would not go to the left if starting from $P$.
    
    Case (i): $i = a, P = P_j^\mathrm{left}$. Every point in $P$ and thus also the starting point of $\ell$ lies above the minimum height~$\overline{h}_{j,a}^\mathrm{left}$ and inside the cone $\Cone_{j,a}^\mathrm{left}$. This together with the fact that $\ell$ ends on the convex hull of $Q$ gives us that $\ell$ cuts the vertical line through the anchor point $a_a^j$ truly above $y=0.5$ and we find an outside point on $\ell$ directly to the left of this cutting point.

    Case (ii): $i = a, P = P_j^\mathrm{right}$. All covering triangles that contain~$P_j^\mathrm{right}$ are area-wise distinct from the unit square that contains the variable gadget of~$x_a$, and we can proceed analogue to~(3).
    
    Case (iii): $i = b, P = P_j^\mathrm{right}$. Every point in $P$ and thus also the starting point of $\ell$ lies above the minimum height~$\overline{h}_{j,b}^\mathrm{right}$ and inside the cone~$\Cone_{j,b}^\mathrm{right}$. This together with the fact that~$\ell$ ends on the convex hull of~$Q$ gives us that~$\ell$ cuts the vertical line through the anchor point $a_b^j$ truly above $y=0.5$, and is thus an outside point.
    
    \item $Q$ is an occurrence point cloud of a variable $x_i$ assigned to a clause $c_l$; and $P$ is a defining point cloud of a clause gadget $c_j$ with $j \neq l$ that does contains $x_i$, such that $c_l$ and $c_j$ are of a different type (one positive and one negative).
    
    Assume without loss of generality that $c_j$ is positive.
    Then, for every $PQ$-triple~$p, p', p^*$ the triangle spanned by the triple contains in it's interior an (open) line segment $\ell$ from the convex hull of $P$ to the convex hull of $Q$. Let $w^\ell_{0.5}$ be the cutting point of $\ell$ with the line $y=0.5$, and let $w^\ell_0$ bet the cutting point of $\ell$ with the $x$-Axis. 
    
    If there is an outside point on $\ell$ shortly above $w^\ell_{0.5}$, we are done.
    
    If the point on $\ell$ shortly above $w^\ell_{0.5}$ is not an outside point, then $w^\ell_{0.5}$ is on the upper border of the unit square containing the variable gadget of some variable $x_r$; more precisely, $w^\ell_{0.5}$ is on the line segment defined by $x\in[2r+0.25,2r+0.75], y=0.5$, which is the line segment above the anchor points of $x_r$.
    Now, if $i=r$, then $w^\ell_{0}$ lies on the line segment defined by $x\in[2i+0.25,2i+0.75], y=0$, and thus $w^\ell_{0}$ is an outside point.
    
    Otherwise $i\neq r$, and $\ell$ leaves the unit square containing the variable gadget of $x_r$ inside the horizontal strip defined by $-0.5\leq y \leq 0.5$. Thus we find an outside point directly after the last intersection of $\ell$ with the boundary of the square containing the variable gadget of $x_r$.
    
    \item $Q$ is an occurrence point cloud~$O_i^l$ of a variable $x_i$ assigned to a clause~$c_l$; and~$P$ is a defining point cloud of a clause gadget $c_j$ with~$j \neq l$ that does contains~$x_i$, such that~$c_l$ and~$c_j$ are of the same type (both positive or both negative).
    
    Assume without loss of generality that~$c_j$ is a positive clause with~$c_j = (x_a \vee x_b \vee x_c), a<b<c$. For every $PQ$-triple~$p, p', p^*$ the triangle spanned by the triple contains in it's interior an (open) line segment $\ell$ from the convex hull of~$P$ to the convex hull of~$Q$.
    
    If~$P$ is not contained in some covering triangle together with the occurrence point cloud~$O_i^j$, we can proceed analogue to~(3).
    
    Otherwise, if the clause~$c_l$ is nested in the clause~$c_j$, we find an outside point directly after the last common point of~$\ell$ with the union of the three covering triangles containing~$P$.
    
    Finally, assume that~$c_l$ is not nested in~$c_j$. Then,~$i\neq b$, or in other words~$x_i$ is one of the two out variables of~$c_j$. Without loss of generality, assume that~$i=a$ and~$P=P_j^\mathrm{left}$.
    Since every point in~$P$ and thus also the starting point of~$\ell$ lies above the minimum height~$\hat{h}_{j,a}^\mathrm{left}$ and inside the cone~$\Cone_{j,a}^\mathrm{left}$, and since~$\ell$ ends on the convex hull of~$Q$, we know that~$\ell$ cuts the vertical line through the anchor point~$a_a^j$ truly above~$y=0.5$ and we find an outside point on~$\ell$ directly to the left of this cutting point.

    \item $Q$ is the occurrence point cloud of a clause~$c_j$ in a variable~$x_i$, and~$P$ is a defining point cloud of $c_j$ such that there exists no covering triangle that covers both~$P$ and~$Q$. More precisely, if $c_j = (x_a \vee x_b \vee x_c)$ (or $c_j = (\overline{x}_a \vee \overline{x}_b \vee \overline{x}_c)$) with $a<b<c$, then either $i=a, Q=O_{j,a}, P=P_j^\mathrm{right}$ or $i=c, Q=O_{j,c}, P=P_j^\mathrm{left}$.
    
    If $a=i, Q=O_{j,a}, P=P_j^\mathrm{right}$, then~$Q(=O_{j,a})$ is to the left of the anchor point~$a_b^j$; and if $c=i, Q=O_{j,c}, P=P_j^\mathrm{left}$, then $Q(=O_{j,c})$ is to the right of the anchor point~$a_b^j$.
    Thus, we can proceed as in~(6).
    
    \item $P$ and~$Q$ are the two defining point clouds of a clause. In other words there exists a clause $c_j = (x_a, x_b, x_c)$ (or $c_j = (\overline{x_a}, \overline{x}_b,\overline{x}_c)$) with~$a < b < c$, such that~$P = P_{j}^\mathrm{left}$ and~$Q = P_{j}^\mathrm{right}$.
    
    Then, for every $PQ$-triple~$p, p', p^*$ the triangle spanned by the triple contains in it's interior a point on the vertical line through the anchor point~$a_b^j$, which is an outside point.
    
    \item $P$ is a defining point cloud of a clause~$c_j$ and~$Q$ is a defining point cloud of a clause $c_{j^*}$ with $j \neq j^*$, such that~$c_j$ and~$c_{j^*}$ are of the same type (both positive or negative).
    
    For every $PQ$-triple~$p, p', p^*$ the triangle spanned by the triple lies either completely above the unit squares containing the variable gadgets or completely below them. Since the triangles spanned by ~$p, p', p^*$ contains both some area of a covering triangle of~$c_j$ and a covering triangle of~$c_{j^*}$, and since due to construction, the covering triangles of different clauses are non-overlapping, there is an outside point in between.
    
    \item $P$ is a defining point cloud of a clause~$c_j$ and~$Q$ is a defining point cloud of a clause~$c_{j^*}$ with $j\neq j^*$, such that~$c_j$ and~$c_{j^*}$ have a different type (one is positive and one is negative).
    
    Then, for every $PQ$-triple~$p, p', p^*$ the triangle spanned by the triple contains in it's interior an open non-vertical line segment $\ell$ from the convex hull of~$Q$ to the convex hull of~$P$.

    If the intersection of $\ell$ with the $x$-axis is an outside point, we are done.
    Otherwise, the line intersects the $x$-axis in the cut of the two defining triangles of some variable~$x_i$. Since $\ell$ is non-vertical, we can follow it from it's intersection with the $x$-axis to the left until we leave the unit square that contains the variable gadget of $x_i$, where we find an outside point.
\end{enumerate} 

In this manner, we have inserted at most one obstacle for every triple of blue points, thus the number of obstacles is bounded in $\mathcal{O}(m_b^3)$.
\end{proof}

We can now show the equivalence between the PM-3-SAT formula $\phi$ and the constructed point set $\mathcal{P}$ seen as an instance of Problem \ref{prb:blue_overlap} or Problem \ref{prb:blue_no_overlap}.

\begin{restatable}{lemma}{lemequivalencepointsphi}
    \label{lem:NP}
    Let $\phi$ be a PM 3-SAT formula with $k$ variables and $m$ clauses. Then $\phi$ is satisfiable if and only if the blue points in $\mathcal{P}$, constructed from $\phi$ as described above, are separable from the red points by $k+2m$ triangles. This is true even if the triangles need to be disjoint.
\end{restatable}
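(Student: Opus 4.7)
The plan is to prove both directions of the equivalence by tracking, in any separating family of at most $k+2m$ triangles, which triangles can be responsible for which blue point clouds, and then exploiting the incompatibility structure guaranteed by Lemma~\ref{lem:insert-obstacles} to either extract or realise a satisfying assignment. The forward direction will be carried out with pairwise disjoint triangles, and the backward direction never uses any overlap hypothesis, so a single argument handles both Problem~\ref{prb:blue_overlap} and Problem~\ref{prb:blue_no_overlap} simultaneously.

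For the forward direction I start from a satisfying assignment $\sigma$ of $\phi$ and associate to each variable $x_i$ a single thin triangle $T_i$ contained in $\delta_i^+$ if $\sigma(x_i)=\mathrm{true}$ and in $\delta_i^-$ otherwise, stretched to contain the defining cloud $V_i$ together with every occurrence cloud $O_{j,i}$ whose clause $c_j$ has the matching polarity. For each clause $c_j = x_a \vee x_b \vee x_c$ with $a<b<c$ (negative clauses are analogous) I then choose exactly two of the four covering triangles $\triangle_{j,a}^\mathrm{left},\triangle_{j,b}^\mathrm{left},\triangle_{j,b}^\mathrm{right},\triangle_{j,c}^\mathrm{right}$ to cover $P_j^\mathrm{left}$, $P_j^\mathrm{right}$ and every $O_{j,i}$ with $i\in\{a,b,c\}$ whose literal is false under $\sigma$. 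Because at least one literal is true, at least one of the three occurrence clouds is already absorbed by its variable triangle, so the case analysis sketched in Section~\ref{sec:np-hard} leaves at most two remaining occurrences to cover with the two clause triangles. All $k+2m$ resulting triangles are subsets of covering triangles, so they avoid red points, each occurrence cloud is covered exactly once, and with a little slack they can be shrunk to be pairwise disjoint.

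For the backward direction let $\mathcal{T}$ be a separating family of size at most $k+2m$. Each of the $k+2m$ defining clouds $V_1,\dots,V_k,P_1^\mathrm{left},P_1^\mathrm{right},\dots,P_m^\mathrm{left},P_m^\mathrm{right}$ contains more than $k+2m$ blue points, so by pigeonhole some triangle of $\mathcal{T}$ covers at least two points of it; I call such a triangle \emph{responsible} for that cloud. By Lemma~\ref{lem:insert-obstacles} the defining clouds are pairwise incompatible, hence each has its own responsible triangle and $|\mathcal{T}|=k+2m$ exactly. Writing $T_i$ for the triangle responsible for $V_i$ and $T_j^\mathrm{left},T_j^\mathrm{right}$ for those responsible for the two clause clouds, the triangle $T_i$ must lie inside one of the only two covering triangles containing $V_i$, namely $\delta_i^+$ or $\delta_i^-$, and these two options are mutually incompatible; I set $x_i=\mathrm{true}$ in the first case and $x_i=\mathrm{false}$ in the second.

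To verify that this assignment satisfies every clause, consider a positive clause $c_j = x_a \vee x_b \vee x_c$. Each of $O_{j,a},O_{j,b},O_{j,c}$ has to be covered in $\mathcal{T}$, and the incompatibility structure allows only $T_a$ or $T_j^\mathrm{left}$ for $O_{j,a}$, only $T_b,T_j^\mathrm{left}$ or $T_j^\mathrm{right}$ for $O_{j,b}$, and only $T_c$ or $T_j^\mathrm{right}$ for $O_{j,c}$. Since $O_{j,a}$ and $O_{j,b}$ are themselves incompatible, $T_j^\mathrm{left}$ covers at most one of them, and analogously $T_j^\mathrm{right}$ covers at most one of $O_{j,b},O_{j,c}$. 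Hence at most two of the three occurrence clouds can be absorbed by the clause triangles and the third must be covered by its variable triangle $T_i$, which forces $T_i\subseteq \delta_i^+$ and therefore $x_i=\mathrm{true}$, so $c_j$ is satisfied; negative clauses are handled symmetrically. The main technical obstacle is to make the confinement step rigorous, that is, to show from the cone and minimum-height construction together with Lemma~\ref{lem:insert-obstacles} that a triangle responsible for a defining cloud is really trapped inside one of the prescribed covering triangles, and that the short list of candidate coverers for each occurrence cloud used above is indeed exhaustive.
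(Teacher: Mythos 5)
Your overall architecture matches the paper's proof: the forward direction selects $\delta_i^{\pm}$ per variable plus two clause triangles per clause via the same case analysis, and the backward direction uses the pigeonhole argument on the $k+2m$ pairwise incompatible defining clouds $V_i, P_j^{\mathrm{left}}, P_j^{\mathrm{right}}$ to force a bijection with the triangles of the solution, then derives a contradiction for any unfulfilled clause from the fact that only two triangles are compatible with the three occurrence clouds. Up to that point the proposal is sound and essentially identical to the paper's argument.

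The genuine gap is the ``confinement'' step you yourself flag at the end, and it is not merely a technicality to be filled in --- as stated it is false. A triangle in a valid solution is only required to avoid red points; nothing forces it to be a subset of any covering triangle. The obstacles of Lemma~\ref{lem:insert-obstacles} are placed at finitely many specific locations in the empty space, so the triangle responsible for $V_i$ can perfectly well protrude outside $\delta_i^{+}\cup\delta_i^{-}$ without containing a red point, and moreover $V_i$ itself sits in $\delta_i^{+}\cap\delta_i^{-}$, so even a confined triangle need not select one of the two. Consequently your definition of the assignment (``$x_i=\mathrm{true}$ iff $T_i\subseteq\delta_i^{+}$'') is neither well-defined nor derivable. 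The paper avoids this entirely by defining the assignment combinatorially: $x_i$ is set to true if and only if $\triangle_{V_i}$ covers at least two points of some \emph{positive} occurrence cloud $O_{j,i}$, with consistency guaranteed because positive and negative occurrence clouds of the same variable are incompatible, so $\triangle_{V_i}$ cannot cover both kinds. With that definition your clause-verification paragraph goes through verbatim: the occurrence cloud not absorbed by a clause triangle must be covered by its variable triangle, which by definition makes the corresponding literal true. You should replace the containment-based definition by this coverage-based one; everything else in your proposal then closes.
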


\begin{proof}
    Assume that $\phi$ is satisfiable, and let $X \in \{\mathrm{True}, \mathrm{False}\}^k$ be a solution of $\phi$.
    For every variable $x_i$, we take the triangle $\delta_i^+$ to the solution if $X[i] = \mathrm{True}$ and the triangle $\delta_i^-$ to the solution if $X[i] = \mathrm{False}$, with which we cover the point cloud $V_i$, and all point clouds corresponding to a clause containing $x_i$ that are fulfilled by the truth-assignment of $x_i$.
    Now let $c_j = (x_a \vee x_b \vee x_c)$ with $a<b<c$ be a positive clause.
    Since at least one of the point clouds $O_{j,a}$, $O_{j,b}$, $O_{j,c}$ corresponding to the occurrence of $x_a, x_b, x_c$ in $c_j$, respectively, must already be covered by the solution, only the following cases can occur:
    \begin{enumerate}
        \item $O_{j,a}$ is already covered, $O_{j,b}$, $O_{j,c}$ are not: Take $\triangle_{j,b}^\mathrm{left}$ and $\triangle_{j,c}^\mathrm{right}$ into the solution.
        \item $O_{j,b}$ is already covered, $O_{j,a}$, $O_{j,c}$ are not: Take $\triangle_{j,a}^\mathrm{left}$ and $\triangle_{j,c}^\mathrm{right}$ into the solution.
        \item $O_{j,c}$ is already covered, $O_{j,a}$, $O_{j,b}$ are not: Take $\triangle_{j,a}^\mathrm{left}$ and $\triangle_{j,b}^\mathrm{right}$ into the solution.
        \item $O_{j,a}$, $O_{j,b}$ are already covered, $O_{j,c}$ is not: Take $\triangle_{j}^\mathrm{left}$ and $\triangle_{j,c}^\mathrm{right}$ into the solution.
        \item $O_{j,a}$, $O_{j,c}$ are already covered, $O_{j,b}$ is not: Take $\triangle_{j,b}^\mathrm{left}$ and $\triangle_{j}^\mathrm{right}$ into the solution.
        \item $O_{j,b}$, $O_{j,c}$ are already covered, $O_{j,a}$ is not: Take $\triangle_{j,a}^\mathrm{left}$ and $\triangle_{j}^\mathrm{right}$ into the solution.
        \item $O_{j,a}$, $O_{j,b}$, $O_{j,c}$ are all already covered: Take $\triangle_{j}^\mathrm{left}$ and $\triangle_{j}^\mathrm{right}$ into the solution.
    \end{enumerate}
    In all of the cases, we have added two triangles to the solution with which all leftover occurrence point clouds of $O_{j,a}$, $O_{j,b}$, $O_{j,c}$ and additionally the two clause defining point clouds $P_{j}^\mathrm{left}$, $P_{j}^\mathrm{right}$ are covered.
    We proceed analogue for the negative clauses.
    Now, the solution contains $k+2m$ non-overlapping triangles, and all point clouds and thus all blue points are covered by them. Further, since all obstacles were places outside of the covering triangles, the solution separates the blue points from them. Thus, we have found a valid solution of the point set $\mathcal{P}$ constructed above of the claimed size.

    Conversely, assume that there exists a solution $\mathcal{T}$ for the constructed point set $\mathcal{P}$ consisting of $k+2m$ (possibly overlapping) triangles.
    Let $\mathcal{Q} = \{V_i \mid i = 1, \dots, k\} \cup \{P_{j}^\mathrm{left}, P_{j}^\mathrm{right} \mid j = 1, \dots, n\}$. Then $|\mathcal{Q}| = k+2m$ and due to construction, all point clouds in $\mathcal{Q}$ are pairwise incompatible with each other.
    Let $Q \in \mathcal{Q}$. Since $|Q| = k+2m+1 > k+2m = |\mathcal{T}|$, there must be a triangle $\triangle_Q \in \mathcal{T}$ that covers at least two points in $Q$ and thus no other point in $\bigcup_{Q' \in \mathcal{T}, Q'\neq Q} {Q'}$. By repeating this argument for all $Q' \in \mathcal{T}, Q'\neq Q$, we can see that no other triangle in $\mathcal{T}$ apart from $\triangle_Q$ covers any points in $Q$, thus $Q$ is covered completely by $\triangle_Q$.
    This implies a unique one-to-one correspondence between $\mathcal{Q}$ and $\mathcal{T}$.
    We construct a solution vector $X \in \{\mathrm{True}, \mathrm{False}\}^k$ of $\phi$in the following way:
    For every variable $x_i$ let $\triangle_{V_i}\in \mathcal{Q}$ be the unique element in $\mathcal{Q}$ that covers $V_i$. If there exists a positive clause $c_j$ containing the variable $x_i$, such that $\triangle_{V_i}$ contains at least two points of the occurrence point cloud $O_{j,i}$, we set $X[i] = \mathrm{True}$. Note that in this case due to incompatibility, $Q$ contains no point in a point cloud corresponding to a negative occurrence of $c_j$.
    If on the other hand the above condition is not fulfilled, we set $X[i] = \mathrm{False}$.
    Assume there is an unfulfilled clause $c_j = (x_a \vee x_b \vee x_c)$ (or $c_j = (\overline{x_a} \vee \overline{x_b} \vee \overline{x_c})$).
    Since $c_j$ is unfulfilled, none of the triangles $\triangle_{V_a}, \triangle_{V_b}, \triangle_{V_c}$ covers two or more points of the occurrence point clouds $O_{j,a}, O_{j,b}$ or $O_{j,c}$, respectively. 
    But due to pairwise incompatibility of $O_{j,a}, O_{j,b}$ and $O_{j,c}$, there exist three pairwise distinct triangles $t_a, t_b$ and $t_c\in \mathcal{Q}$ that each cover at least two points of $O_{j,a}, O_{j,b}$ and $O_{j,c}$, respectively.
    But the only two point clouds in $\mathcal{Q}$ that are compatible with $O_{j,a}, O_{j,b}$ or $O_{j,c}$ are $\triangle_{P_{j}^\mathrm{left}}$ and $\triangle_{P_{j}^\mathrm{right}}$, a contradiction.
    Thus, $X$ is a valid truth-assignment for $\phi$, which concludes the proof.
\end{proof}

The only remaining piece to establish the NP-hardness of \nameref{prb:blue_overlap} and \nameref{prb:blue_no_overlap} is to show that the construction of $\mathcal{P}$ can be done in polynomial time.
\begin{restatable}{lemma}{lemPolySize}
    \label{lem:poly-size}
    Let $\phi$ be a PM 3-SAT formula with $k$ variables and $m$ clauses. The point set $\mathcal{P}$ together with their coordinates, constructed from $\phi$ as described above, has polynomial size in $k$ and $m$.
\end{restatable}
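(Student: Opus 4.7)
The plan is to bound separately the number of points in $\mathcal{P}$ and the bit-complexity of their coordinates, and to show that both are polynomial in $k$ and $m$. Each variable gadget contains a single defining blue point cloud of size $k+2m+1$, so the variable gadgets contribute $k(k+2m+1)$ blue points in total; each clause gadget contains five point clouds of the same size (the two defining clouds $P_j^{\mathrm{left}}, P_j^{\mathrm{right}}$ and the three occurrence clouds $O_{j,t}, O_{j,u}, O_{j,v}$), contributing $5m(k+2m+1)$ blue points. Hence the total number $m_b$ of blue points lies in $\Oh((k+m)^2)$. By \Cref{lem:insert-obstacles}, the number of red obstacle points is in $\Oh(m_b^3) = \Oh((k+m)^6)$, so $|\mathcal{P}|$ is polynomial in $k$ and $m$.

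For the coordinate complexity, I would walk through the construction and bound each introduced point's bit-complexity. The variable centers, the corners of the defining triangles $\delta_i^\pm$, and the anchor points are rationals with denominator at most $2(m+1)$ and numerator bounded by $\Oh(k+m)$, so they have bit-complexity $\Oh(\log(k+m))$. For the clause gadgets I would slightly strengthen the definition of $y_j$ to be the smallest integer strictly larger than the maximum $y$-coordinate occurring among the covering triangles of $C^j_\mathrm{nested}$; this still fulfils every role $y_j$ plays in the construction, and a straightforward induction on the nesting depth (which is at most $m$) gives $y_j \leq m+2$, an integer of bit-complexity $\Oh(\log m)$. With this adjustment, every cone $\Cone_{j,\ast}^\ast$ is determined by two points of bit-complexity $\Oh(\log(k+m))$, and the minimum heights $\overline{h}, \hat h$, the apex points $i_j^\mathrm{left}, i_j^\mathrm{right}$, and the raw and refined corners of the four clause-covering triangles are each obtained by a constant number of line--line intersections and axis-parallel shifts applied to objects of bit-complexity $\Oh(\log(k+m))$. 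Because the only information the level-$j$ construction uses from deeper nesting levels is the integer $y_j$, the bit-complexity does not compound across levels and stays $\Oh(\log(k+m))$ throughout.

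Finally, the blue point clouds and the red obstacle points are placed in $\varepsilon$-balls around previously defined points with $\varepsilon = 1/N$ for some polynomial $N = N(k,m)$, chosen large enough that the polynomially many general-position, convex-hull, and outside-point conditions required in the NP-hardness argument and in \Cref{lem:insert-obstacles} are all satisfied. The cloud and obstacle points are then rational combinations of previously constructed points and $1/N$, and therefore inherit polynomial bit-complexity. The main obstacle I expect is making the inductive bit-complexity bound rigorous: one must ensure that the up to $m$ nesting levels of the clause construction do not cause the complexity to double at each step. The integer choice of $y_j$ above is precisely what decouples the levels, and verifying this carefully is the step on which I would spend most of the effort.
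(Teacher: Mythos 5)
Your count of the points is fine and matches the paper: $\Oh((k+m)^2)$ blue points in $k+5m$ clouds of size $\Oh(k+m)$ each, plus $\Oh(m_b^3)$ obstacles via \Cref{lem:insert-obstacles}. The gap is in the coordinate part, and it is exactly at the step you flag as the crux: the claim that $y_j\leq m+2$, and hence that all coordinates have bit-complexity $\Oh(\log(k+m))$, is false. The reason is that the covering triangles of a clause at nesting level $j$ do \emph{not} stay within $\Oh(1)$ of the height $y_j$. The cone $\Cone_{j,t}^\mathrm{left}$ is pinned between the vertical line through $a_t^j$ and a line through the point $(x_{\text{right neighbour}},y_j)$, so its right boundary has slope of order $(m+1)\,y_j$ (the anchor gap is $\Theta(1/(m+1))$). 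The apex $i_j^\mathrm{left}$ is essentially the intersection of this steep boundary with the equally steep left boundary of $\Cone_{j,u}^\mathrm{left}$, whose origins are horizontally separated by up to $\Theta(k)$; that intersection lies at height roughly $k\,(m+1)\,y_j$, not at height $y_j+\Oh(1)$. Consequently $y_{j'}$ for the enclosing clause is a multiplicative factor $\Theta(km)$ larger than $y_j$, and after up to $m$ nesting levels the heights reach values of order $(km)^{\Theta(m)}$. Rounding $y_j$ to an integer does not prevent this blow-up, so your induction does not close.

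The conclusion is still reachable, but only via the weaker (and correct) accounting that the paper uses: each coordinate of a clause gadget is obtained from previously constructed coordinates by $\Oh(k+m)$ elementary steps (line--line intersections, shifts, rotations about rational points), and each step increases the magnitudes of numerators and denominators by at most a polynomial factor, hence the \emph{binary length} of each coordinate grows by an additive $\Oh(\log(km))$ per step. Summed over the $\Oh(k+m)$ levels this gives bit-length $\Oh((k+m)\log(km))$ --- polynomial, even though the coordinate \emph{values} are exponential. If you rewrite your induction to track bit-length additively rather than to bound the values by a constant, the rest of your argument (including the choice of a polynomially small $\varepsilon=1/N$ for the clouds and obstacles) goes through.
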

\begin{proof}
    We first show the total number of points we place is polynomial in terms of $k$ and $m$.
    Each point cloud contains $2k+m+1$ blue points, and we placed $(k + 5m)$ point clouds: a point cloud $V_i$ for every variable $x_i$, and the five point clouds $P_{j}^\mathrm{left}$, $P_{j}^\mathrm{right}, O_{j,a}$, $O_{j,b}$, $O_{j,c}$ for every clause $c_j = (x_a \vee x_b \vee x_c)$ or $c_j = (\overline{x_a} \vee \overline{x_b} \vee \overline{x_c})$.
    The total number of blue points is thus $m_b = (2k+m+1)\cdot (k + 5m)$.
    The number $m_r$ of obstacles is upper bounded in $\mathcal{O}(m_b^3)$ since for every triple of blue points we placed at most one obstacle.

    In order to show that the coordinates of the points are polynomial in terms of $k$ and $m$ as well, it suffices to show that the coordinates of the endpoints of covering triangles are polynomial in $k$ and $m$. This clearly holds for the two covering triangles per variable gadget.
    The end-points of the triangle corresponding to clause gadgets are all constructed out of them by $\mathcal{O}(k+m)$ steps, and in each of the steps the size of the coordinate in can only increase by a constant factor. Thus, the length of the binary representation can only increase by a constant additive term in each step, and the whole instance is of polynomial size.
    Therefore, the construction of the point set can be done in polynomial time.
\end{proof}

This implies the following theorem:

\begin{theorem}
    \label{thm:NP}
     \nameref{prb:blue_overlap} and \nameref{prb:blue_no_overlap} are NP-hard.
\end{theorem}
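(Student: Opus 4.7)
The plan is to combine the previously established lemmas into a formal NP-hardness statement via a polynomial-time reduction from \textsc{Planar Monotone 3-SAT}, which is known to be NP-hard~\cite{de2012optimal}. Given a PM 3-SAT formula~$\phi$ with $k$ variables and $m$ clauses together with its planar embedding $\Gamma_\phi$, the strategy is to construct the bichromatic point set $\mathcal{P}$ (with variable gadgets, clause gadgets, and obstacles as detailed above) and then argue that a minimum triangular separator of the blue points has cardinality at most $k+2m$ if and only if $\phi$ is satisfiable, regardless of whether triangles are allowed to overlap.

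First I would invoke \Cref{lem:poly-size} to guarantee that the entire construction (coordinates of blue points, obstacle placements, and auxiliary data) has bit-size polynomial in $k+m$, so the reduction runs in polynomial time. The key observation here is that we only need $\mathcal{O}((k+m)^3)$ obstacle points (one per blue triple whose triangle must be forbidden) and $\mathcal{O}((k+m)^2)$ blue points, while the coordinates are built up in $\mathcal{O}(k+m)$ refinement steps, each enlarging the binary representation by only an additive constant.

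Next I would apply \Cref{lem:NP} in both directions. For the forward direction, given a satisfying assignment~$X$, I pick $\delta_i^+$ or $\delta_i^-$ for each variable according to $X[i]$, and for each clause~$c_j$ I pick two of the four clause-triangles using the case distinction on which occurrence point clouds are already covered by the variable-triangles; this yields a separator of size exactly $k+2m$, and all selected triangles are covering triangles, hence contain no obstacle. For the backward direction, I use the set $\mathcal{Q}$ of $k+2m$ pairwise incompatible point clouds (the $V_i$ and the $P_j^{\mathrm{left}}, P_j^{\mathrm{right}}$) together with the pigeonhole argument: since each cloud contains strictly more than $k+2m$ points, the forced one-to-one correspondence between $\mathcal{Q}$ and the solution triangles lets me read off a truth assignment from which occurrence clouds are covered by each variable-triangle, and incompatibility then forces every clause to be satisfied.

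The genuine work has already been done in \Cref{lem:insert-obstacles}, \Cref{lem:NP}, and \Cref{lem:poly-size}. The main obstacle was ensuring that the ten case-by-case constructions of obstacles in \Cref{lem:insert-obstacles} truly make all ``bad'' pairs of point clouds incompatible without any obstacle landing inside a covering triangle; the minimum-height parameters $\overline{h}, \hat{h}$ and the careful refinement of the clause triangles are precisely what is needed for this. Given that machinery, the theorem follows by stringing the three lemmas together: \Cref{lem:poly-size} gives polynomial-time computability, and \Cref{lem:NP} provides the equivalence of satisfiability and separability by $k+2m$ triangles in both the overlap and the disjoint regimes simultaneously, so both \nameref{prb:blue_overlap} and \nameref{prb:blue_no_overlap} are NP-hard.
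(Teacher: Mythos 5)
Your proposal is correct and follows exactly the paper's route: the theorem is obtained by combining \Cref{lem:NP} (equivalence of satisfiability with separability by $k+2m$ triangles, in both the overlapping and disjoint regimes) with \Cref{lem:poly-size} (polynomial size of the construction). The only minor slip is the obstacle count, which is $\mathcal{O}(m_b^3)$ for $m_b = \Theta((k+m)^2)$ blue points, i.e.\ $\mathcal{O}((k+m)^6)$ rather than $\mathcal{O}((k+m)^3)$ --- still polynomial, so the argument is unaffected.
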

\begin{proof}
    Follows directly from \ref{lem:NP} and \ref{lem:poly-size}.
\end{proof}

We note that the precise complexity in which \nameref{prb:blue_no_overlap} is contained is unclear; however, for \nameref{prb:blue_overlap}, this result can be strengthened.

\begin{theorem}
    \label{thm:NP-complete}
    \nameref{prb:blue_overlap} is NP-complete.
\end{theorem}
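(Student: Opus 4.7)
The plan is to combine the NP-hardness from Theorem~\ref{thm:NP} with a proof of membership in NP. The only missing ingredient is to exhibit a polynomial-size certificate for a \emph{yes}-instance of Problem~\ref{prb:blue_overlap} that can be verified in polynomial time.

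The first step is to show that any optimal solution $\mathcal{T}$ can be replaced by an equivalent collection of the same cardinality in which every triangle is combinatorially determined by $O(1)$ input points. Fix $T \in \mathcal{T}$, and let $B_T \subseteq P$ be the blue points it covers; by assumption $T$ contains no red point. I would post-process $T$ by sliding each of its three edges inward until it is pinned by a blue point of $B_T$ (otherwise $T$ could be shrunk, still covering $B_T$); then rotate each edge around its pinned point while maintaining the coverage of $B_T$ and the exclusion of red points, until either a second input point lies on it or the rotation is blocked by another edge of $T$ reaching a vertex incidence. After this canonicalization, each edge is supported by two points of $P$, so $T$ is described by at most six input points.

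The second step is to count: there are at most $O(n^6)$ such canonical triangles, and an optimal solution uses at most $n$ of them. Hence the certificate --- an explicit list of at most $n$ canonical triangles, each encoded by its at most six defining input points --- has polynomial size. Verification is then immediate: for every triangle and every input point one tests inclusion in constant time; accept if every blue point lies in some triangle and no triangle contains a red point. This runs in polynomial time, placing Problem~\ref{prb:blue_overlap} in NP and, together with Theorem~\ref{thm:NP}, establishes NP-completeness.

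The main obstacle is the rigorous justification of the canonical-form step, in particular showing that the edge-rotation procedure always terminates at a configuration with a combinatorial witness rather than leaving a residual continuous degree of freedom. One must handle degenerate cases such as an edge pinned at a single blue point whose rotation is not blocked by any input point. A clean way to formalize this is to observe that the space of triangles with fixed coverage $B_T$ (and empty red intersection) is a bounded semialgebraic set of dimension at most six, within which one can always move to a zero-dimensional face; at such a face, the triangle is pinned down by a constant number of point-edge incidences. Alternatively, a direct case analysis on each edge --- two blue points on it, or one blue on it and one red point blocking further outward motion --- would suffice and is what I would carry out in the formal proof.
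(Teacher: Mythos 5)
Your proposal is correct and follows essentially the same route as the paper: establish membership in NP by canonicalizing each triangle so that every side passes through two input points, enumerate the resulting $O(n^6)$ candidate triangles, and verify a polynomial-size certificate, with hardness imported from Theorem~\ref{thm:NP}. The paper handles your worry about residual degrees of freedom by explicitly admitting degenerate triangles (segments and single points, i.e.\ subsets of the $O(n^2)$ two-point lines of size $1$ as well as $3$) and by allowing red points on triangle boundaries, which are then removed by a small shrinking step justified by general position.
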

\begin{proof}
    We know that \nameref{prb:blue_overlap} is NP-hard from \Cref{thm:NP}.

    We show that \nameref{prb:blue_overlap} is also contained in NP.
    Let $\mathcal{P}$ be a set of blue and red points.
    
    First, we argue that it suffices to consider only (possibly degenerate) triangles that have at least two points (we allow both red and blue) on each side segment.
    If we have a pseudo-solution consisting only of triangles of this type, the only thing that hinders it from being a real solution for the problem are that there are red points on the borders of the triangles. However, since the points lie in general position, we can always shrink this triangle such that it contains precisely the same blue points as before, but none of the red border points.
    The other way around, if we have any solution $\mathcal{T}$, we can transform every triangle, by taking each of the three sides and moving and/or rotating it individually until it hits 2 points while never losing any blue point in the set. In this manner, red points can appear only at the border of the triangle.

    With this categorization of a solution (or at least some type of pseudo-solution out of which we are guaranteed to be able to generate a solution), we construct all triangles that can be a part of such a solution in the following way:

    We compute the set $\mathcal{L}$ of lines that contain  precisely two points out of $\mathcal{P}$. Clearly, $|\mathcal{L}| \in O(|\mathcal{P}|^2)$
    We can then compute the set of candidate triangles (including the degenerate triangles) by taking all subsets of $\mathcal{L}$ of size 3 or 1, and set them as the bounding lines of the respective triangle.
    This results in $O(|\mathcal{L}|^3)$ = $O(|\mathcal{P}|^6)$ triangles, out of which we can guess the right solution in nondeterministic polynomial time.    
\end{proof}



\section{Detail results of \Cref{sec:bounds}}

\begin{restatable}{proposition}{propcolorchange}
	\label{prop:colorchanges}
	Let $k \geq 0$. Given a bichromatic point set on a closed convex curve in the plane such that $2k$ is the number of times two consecutive points on the curve have different colors, there is no set of $k$ disjoint triangles containing all points such that no triangle contains points of both colors.
\end{restatable}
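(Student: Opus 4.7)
The plan is to reduce the geometric problem to a combinatorial statement about non-crossing partitions of a cyclic sequence. Suppose, for contradiction, that $\mathcal{T} = \{T_1, \ldots, T_k\}$ is a set of $k$ pairwise disjoint monochromatic triangles that together cover the point set $P$. I would first show that the partition of $P$ given by the groups $T_i \cap P$ is \emph{non-crossing} with respect to the cyclic order of $P$ on the convex curve. For any two disjoint triangles $T_i, T_j$, the separating line theorem provides a line $\ell$ with $T_i$ and $T_j$ on opposite sides; $\ell$ meets the closed convex curve in at most two points and hence splits the curve into at most two arcs. All points of $T_i \cap P$ lie in one arc and all points of $T_j \cap P$ in the other, so no two groups are interleaved in the cyclic order.

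It therefore suffices to establish the following combinatorial claim: every non-crossing partition of a cyclic sequence of colored points with exactly $2k$ color changes into monochromatic blocks uses at least $k+1$ blocks. I plan to prove this by induction on the total number of points, using the fact that every non-crossing partition contains a block $B$ whose elements form a single contiguous arc on the cycle. One obtains such a contiguous block by descending into an ``innermost'' part of the partition: if a chosen block has elements at several cyclic positions, the non-crossing property forces the partition restricted to any of its gaps to again be non-crossing, and induction produces a contiguous block inside that gap.

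With a contiguous monochromatic block $B$ of color $c$ in hand, I would delete its points from the cyclic sequence. Let $p_-, p_+$ be the immediate cyclic neighbors of $B$. If both $p_-$ and $p_+$ have the opposite color $\bar c$, then deleting $B$ reduces the number of color changes by exactly $2$ and the number of blocks by $1$; the induction hypothesis then gives that the remaining partition has at least $k$ blocks, so the original had $m \geq k+1$. If instead at least one of $p_-, p_+$ has color $c$, then deleting $B$ preserves the number of color changes while still removing one block, and induction yields the even stronger bound $m \geq k+2$. In every case one reaches $m \geq k+1$, contradicting $m = k$. The base case $k=0$ is immediate since any non-empty monochromatic point set requires at least one triangle.

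The only genuinely geometric ingredient is the non-crossing lemma via linear separation, which is clean. I expect the main subtlety to be the careful verification that a contiguous block always exists in a non-crossing partition and that the case analysis when removing it really does behave as claimed with respect to both the color-change count and the non-crossing structure; this is standard but worth spelling out to make the induction airtight.
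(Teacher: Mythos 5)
Your proof is correct, and it takes a genuinely different route from the paper's. The paper argues by induction on $k$ directly in the geometric setting: it picks one triangle of the solution, observes that a triangle meets a closed convex curve in at most six points and hence leaves at most three arcs uncovered, notes that every other triangle can only serve one arc, and recurses on the arcs using the inequality $2k\leq 2k_1+2k_2+2k_3+6$. You instead spend all the geometry in a single separating-line lemma showing that the partition of $P$ induced by the disjoint triangles is non-crossing in the cyclic order along the curve, and then prove a purely combinatorial statement: a non-crossing partition of a colored cycle with $2k$ color changes into monochromatic blocks has at least $k+1$ blocks, by peeling off a contiguous block and tracking how the color-change count evolves (your case analysis --- drop by exactly $2$ when both neighbors have the opposite color, unchanged otherwise --- is right, as is the standard existence argument for a contiguous block via descending into a gap). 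Your version buys a clean separation of concerns: the triangle-ness of the covering sets is never used beyond convexity and disjointness, so the statement you actually prove holds verbatim for any family of pairwise disjoint convex sets, whereas the paper's count of "at most three pieces" is specific to triangles (though their final inequality would also survive a larger piece count). The paper's version is more self-contained and avoids setting up the non-crossing-partition machinery. The two remaining things to spell out in your write-up are the degenerate cases (the block is the whole set, or only one point survives the deletion) and the observation that the number of triangles is at least the number of nonempty blocks, so the combinatorial bound really does contradict a cover by $k$ triangles; both are routine.
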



\begin{proof}
	We will prove the proposition by induction. Note that it holds for $k=0$, because that means that only one color of points occurs and we must use a single, big enough triangle.
	Assume now $k>0$. 
	We start with a triangle to cover some of the points. Then this triangle cuts the curve into at most three pieces. 
	We assume that there are exactly three pieces, where some pieces of the curve might not contain additional points. 
	We can make those three pieces into closed convex curves again by routing a connection along the boundary of the cutting triangle. 
	Let $2k_1,2k_2,2k_3$ be the number of color changes on these three parts. If there are no points, we set $k_i = -1$ since we clearly do not need additional triangles to cover those parts. 
	By the convexity of the curve every triangle that covers some of the points can only cover points from one of the pieces, because otherwise it would intersect the cutting triangle. 
	Therefore by induction we need at least $k_1+1+k_2+1+k_3+1$ triangles to cover the whole point set. 
	We now know that any of the color changes either appeared on one of the three pieces or exactly one of the points was covered by the cutting triangle.
	In this case, the cutting triangle cuts the curve in between the two consecutive vertices of different colors. Since a triangle can cut a convex curve at most six times, we get:
	\begin{equation*}
	   2k\leq 2k_1+2k_2+2k_3+6\iff 1+k_1+1+k_2+1+k_3+1\geq k+1.
    \end{equation*}
	However the latter is the total number of triangles we need, one for the original cutting triangle plus the number of triangles we still need to cover the three pieces, and it is at least $k+1$. This concludes the proof.
\end{proof}

\begin{restatable}{proposition}{propinductcomp}
   Given a bichromatic planar $n$-point set, there are $\lfloor\frac{n}{2}\rfloor + 1$ disjoint triangles within the convex hull of $P$ containing all points such that no triangle contains points of both colors.
	\label{prop:induct+comp}
\end{restatable}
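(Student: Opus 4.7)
The plan is to prove the statement by strong induction on $n$, with the base cases $n \le 2$ handled directly. In the inductive step, the goal is to \emph{peel off} a subset $S \subseteq P$ of size $1$ or $2$ with a single triangle, then apply the inductive hypothesis to $P \setminus S$ inside $\conv(P \setminus S)$ to produce the remaining $\lfloor (n - |S|)/2 \rfloor + 1$ triangles, bringing the total to $\lfloor n/2 \rfloor + 1$. The key design criterion is to place the peeling triangle disjoint from $\conv(P \setminus S)$, which automatically separates it from all recursive triangles (they lie inside $\conv(P \setminus S)$).

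The central case is when $\conv(P)$ has two adjacent vertices $p, q$ of the same color. Then I would take $S = \{p, q\}$ and construct a thin triangle $pqr$ with $r$ chosen slightly inside $\conv(P)$ on the interior side of the edge $pq$. Since $p, q$ are hull-adjacent and $P$ is in general position, every other point of $P$ lies strictly in the open half-plane on the interior side of the line through $p, q$, so $\conv(P \setminus \{p, q\})$ has strictly positive distance $d > 0$ from this line. Choosing $r$ at distance $\varepsilon < d$ from the segment $pq$ keeps $pqr$ inside $\conv(P)$, disjoint from $\conv(P \setminus \{p, q\})$, and containing only $p$ and $q$ among the points of $P$; applying the induction to $P \setminus \{p, q\}$ closes the count.

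The harder case is when the hull of $P$ is color-alternating, so its number of vertices is even and every pair of adjacent hull vertices differs in color. One would like to reduce to the previous case by cutting off a single hull vertex $v$ with a small corner triangle (so that, after removal, the two former hull-neighbors of $v$---both colored opposite to $v$---become hull-adjacent of the same color). A parity check shows that this singleton peel gives the right count precisely when $n$ is even. When $n$ is odd, the hull must be alternating with an odd number of interior points, and a singleton peel overshoots the budget; instead one must peel off a same-colored pair involving an interior point, or a triple of same-colored points, which in general cannot be placed disjoint from $\conv(P \setminus S)$ by the simple thin-sliver trick.

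The main obstacle I expect is precisely this alternating-hull case, especially when removing a hull vertex $v$ always exposes some interior point as a new hull vertex (so the two intended neighbors no longer become adjacent in $\conv(P \setminus \{v\})$), or when parity forces an interior-involving pair peel. The standard remedies would be either to strengthen the inductive hypothesis to allow a thin ``forbidden corridor'' from the boundary into the interior that the recursive cover must avoid, or to use a sweeping-line argument locating a line along which $P$ splits into two smaller bichromatic pieces handled separately and then reassembled. Either remedy requires additional case analysis on where interior points lie relative to the convex hull, together with careful parity bookkeeping; this forms the main technical work of the proof.
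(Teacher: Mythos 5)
Your easy case (peeling two hull-adjacent points of the same color with a thin sliver) and the even-to-odd reduction by peeling a single hull vertex are both correct and match the paper's opening moves. But the case you flag as the main obstacle --- odd $n$ with a color-alternating convex hull --- is exactly where the real proof lives, and your proposal leaves it unresolved. The paper's remedy is close to your second suggested fix: split $P$ along a line through two same-colored points $a,b$, at least one of them on the hull, into an odd piece $P_o$ and an even piece $P_e$ containing $a$ and $b$ as consecutive same-colored hull vertices. The catch is the count. Applying the unstrengthened inductive hypothesis to both pieces costs $(n_o+1)/2$ plus $n_e/2+1$, which totals $\lfloor n/2\rfloor + 2$, one over budget. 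So the paper in fact proves a \emph{strengthened} statement: an even bichromatic set with two consecutive same-colored hull vertices can be covered with exactly $n_e/2$ disjoint triangles inside its hull, except for two explicit small ``bad examples'' (a monochromatic triangle with one opposite-colored point inside, and two interlocking such triangles). Proving that strengthening is the bulk of the work --- a visibility and line-rotation argument that either finds a reducible same-colored pair or an empty monochromatic triangle to peel, backed by computer verification for $n\le 12$ --- and one must additionally show that the splitting pair $(a,b)$ can always be chosen so that the even side is not a bad example (otherwise the whole point set is small enough to be handled by the computer check). None of this machinery appears in your proposal, so the gap is genuine and not merely ``additional case analysis on where interior points lie'': without the strengthened even-case bound, every natural decomposition of the odd alternating-hull case loses one triangle.
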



\begin{proof}
	In a first step, we show that it is sufficient to prove the statement for point sets with an odd number of points. 
	Let $P$ be a set with an even number $n$ of points. Let $x$ be a point from the convex hull of $P$ and $P' = P\backslash \{x\}$. Then $P'$ contains an odd number of points.
	Using the odd case, we can cover $P'$  with $\frac{n}{2}$ triangles which stay in the convex hull of $P'$. Since $x$ is not in the convex hull of $P'$, we can cover $x$ itself with an additional triangle. Hence in total we can cover $P$ with $\frac{n}{2} +1 $ triangles. 
	
	To show the odd case, we choose two points $a,b$ with the same color at least one of which is from the convex hull of $P$. 
	If we divide the point set $P$ by the line  $ab$, we obtain two smaller point sets. One of the point sets $P_o(a,b)$ has an odd number of points. The other one has an even number. We count $a$ and $b$ to the part $P_e$ with an even number of points.
	$P_e(a,b)$ has the following property:
	\begin{itemize}
		\item There are two points of the same color on the convex hull. 
	\end{itemize}
	
	As we show in the following, except for a few \emph{bad examples}, those point sets can be covered with $\frac{n}{2}$ triangles. 
	The bad examples can be described as follows (in particular $n=4,6$): 
	\begin{enumerate}
		\item Three points of the one color whose convex hull contains a point of the other color. \label{item:badexample4}
		\item Three red and three blue points such that the convex hulls of the three points of each color contain a point of the other color. \label{item:badexample6}
	\end{enumerate}

    Moreover, for $|P|=n \leq 12$, we checked these claims by computer using a SAT solver, see \Cref{sec:computational}. 

    We prove the statements in the more general setting of pseudoconfiguration of points which are described via triple orientations of points. Moreover in order to study it with computer assistance we restrict to the cases where the triangles are spanned by at most 3 points of the point set. 
    A  detailed description of the SAT model is given in \cref{sec:computational}. 
	\begin{enumerate}
		\item If $n$ is odd, $P$ can be covered with $\frac{n+1}{2}$ triangles contained in $\conv(P)$.
		\item If $n$ is even and there are two consecutive points of the same color on the convex hull. Then $P$ can be covered with $\frac{n}{2}$ triangles contained in $\conv(P)$ if $P$ is not a bad example.
	\end{enumerate}
	
	For $n >12$ we show the two statements by induction. 
	First, let $P$ be a point set of $n=2k$ bichromatic points with two consecutive points $a$ and $b$ on the convex hull with the same color, without loss of generality blue. 
	Let $P' = P \setminus \{a,b\}$.
	We say a point $x\in P'$ is \emph{visible} from $a$ and $b$ if the triangle $abx$ does not intersect the convex hull of the remaining points. 
	
	If $P'$ contains two consecutive points of the same color on the convex hull, we cover $a$ and $b$ by a triangle. Since $n >12$, $P'$ has at least $10$ points and hence is not one of the bad examples.
	By induction, we can cover $P'$ by $\frac{n}{2}-1$ triangles. Hence $P$ can be covered with $\frac{n}{2}$ triangles. 
	
	Moreover, if there is one point $x$ on the convex hull of $P'$ which is blue and visible from $a$ and $b$, the triangle $abx$ does not intersect $\conv(P'\setminus\{x\})$.
	The number of remaining points is $2k-3$ which is odd and smaller than $2k$. By induction we can cover those points by $\frac{2k-2}{2}$ triangles.
	Hence in total $\frac{2k}{2} = \frac{n}{2}$ triangles cover $P$ as claimed. 
	
	\begin{figure}
		\centering
		\includegraphics[scale = 0.6]{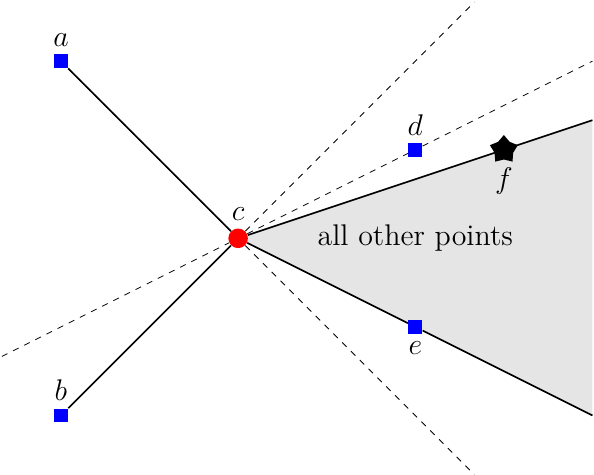}
		\caption{A visualization for the proof of \Cref{prop:induct+comp}.}
		\label{fig:All points hidden}
	\end{figure}
	
	Hence all points $x$ on the convex hull of $P'$ that are visible from $a$ and $b$ must be red. Since there is always at least one such point and those points form a consecutive part of the convex hull, which has no two consecutive points of the same color, 
	there is exactly one visible point $c$ from $a$ and $b$ which is red. 
	For illustration refer to \Cref{fig:All points hidden}. 
	
	Let $d$ and $e$ be the points next to $c$ on the convex hull of $P'$. 
	Since no two consecutive points on the convex hull of $P'$ have the same color, $d$ and $e$ are both blue. 
	Let $\ell$ be the line spanned by $c$ and $d$. Let $f$ be the first of the not yet labeled points that is hit if we rotate it around $c$ towards the remaining points of $P'$. 
    Since all points of $P'$ are on the same side of the line $\ell$ and we rotated towards $P'$ all points of $P' \setminus \{d\}$ are on the same side of the line spanned by $c$ and $f$. 
	Since $a$ and $b$ are separated by the rotated line one of the two points is on the same side as $P' \setminus \{d\}$. We assume without loss of generality $b$ is on the same side. 
	
	Hence $c$ and $f$ are two consecutive points on the convex hull of $P \setminus \{a,d\}$. 
	If $f$ is red, we can cover $a$ and $d$ by a triangle and proceed by induction on the point set $P \setminus \{a,d\}$ which has two consecutive points on the convex hull of the same color. Hence by induction we can cover $P$ with $\frac{n}{2}$ triangles. 
	
	Otherwise $f$ is blue and the three blue points $a,d,f$ span an empty triangle. 
	This triangle does not intersect the convex hull of the points in $P \setminus \{a,d,f\}$. By induction for the odd case, we can cover those points with $\frac{n-2}{2}$ triangles, which yields $\frac{n}{2}$ triangles to cover $P$. 
	This completes the proof for the even case. \\

	Now let $P$ be a point set with an odd number $n$ of bichromatic points.
	Without loss of generality there are more blue than red points. 
    If there are two consecutive points of the same color on the convex hull, we can cover those with a triangle. The remaining point set can be covered with $\frac{n-1}{2}$ triangles that do not overlap with this triangle by induction, thus we need at most $\frac{n+1}{2}$ triangles to cover $P$. 
    
    Thus there is a blue point $a$ on the convex hull.
	Every pair $(a,b)$ of blue points defines a line that splits the point set into two sets, one with an even number of points and one with an odd number of points. Note that every bad example has exactly 1 or exactly 3 blue points. Since the even side includes $a$ and $b$ it has exactly one further blue point if it is a bad example.

	Assume for contradiction that every pair $(a,b)$ of blue points defines a bad even side. Fix some blue point $b\neq a$. If the odd side of $(a,b)$ contains no blue points then we have 3 blue points and hence a point set with at most 5 points in total and the claim follows by the induction base. 
	Thus the odd side contains a blue point. Rotate the line through $a$ and $b$ around $a$ towards the odd side until the next blue point $b'$ is hit. The even side defined by $(a,b')$ is again bad. The even side is not the one containing $b$ though, since it would contain the even side of $(a,b)$ and therefore another blue point. But then all blue points are contained either in the bad example defined by $a,b$ or $a,b'$. As these both only contain at most 3 blue points each, and they even share the point $a$, there are at most 5 blue points. As there are more blue than red points, we get $n\le 9$, all of which are cases we analysed by computer.\\
	It follows that we can find $(a,b)$ such that the even side is not a bad example. Covering the resulting sides using induction yields a covering with the claimed number of triangles.
\end{proof}

\begin{restatable}{proposition}{propmatchcomp}
   \label{prop:match+comp}
  Given a bichromatic planar $n$-point set, there are $\frac{13}{30}n+\Oh(1)$ triangles containing all points such that no triangle contains points of both colors.
\end{restatable}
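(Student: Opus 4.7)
The plan is to refine the bound $\lfloor n/2\rfloor + 1$ from Proposition~\ref{prop:induct+comp} by means of a computer-assisted structural lemma. The key ingredient is that every bichromatic point set in general position with at least $15$ points contains two pairwise interior-disjoint \emph{empty monochromatic triangles} --- two triangles each of whose three vertices share the same color and whose interiors contain no point of $P$. Any such triangle is automatically a valid monochromatic separator, since emptiness rules out opposite-color interior points; moreover, using one such triangle to cover three same-colored points saves one triangle compared to the two (a pair plus a singleton) that pair-covering would need for the same three points.

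First, I would verify the $14$-point threshold by a SAT-based enumeration of all bichromatic order types on up to $14$ points, following the setup outlined in \Cref{sec:computational}, checking that every order type on $\geq 15$ points admits two interior-disjoint empty monochromatic triangles.

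Next, to extract $\Omega(n)$ valid triples inside $P$, I would partition $P$ by $x$-coordinate into $\lfloor n/15\rfloor$ consecutive batches of size $15$ (plus a remainder of at most $14$ points). Each batch occupies a vertical strip that is disjoint from the strips of the other batches, so any triangle whose three vertices lie in a batch is contained entirely in that batch's strip and consequently contains no point from outside the batch. Applying the computational lemma inside each batch hence yields two interior-disjoint empty monochromatic triangles that remain empty with respect to the entire $P$, and are thus valid separators. Pooling across batches produces a family $\mathcal{F}$ of pairwise vertex-disjoint valid monochromatic separator triangles with $|\mathcal{F}| \geq 2\lfloor n/15\rfloor$, covering $6\lfloor n/15\rfloor \geq 2n/5 - \Oh(1)$ points in total.

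Finally, I would cover the remaining $n - 3|\mathcal{F}| \leq 3n/5 + \Oh(1)$ points (those not used as vertices of any triangle in $\mathcal{F}$) by invoking Proposition~\ref{prop:induct+comp}, adding at most $\tfrac{1}{2}(n - 3|\mathcal{F}|) + 1 \leq 3n/10 + \Oh(1)$ further (disjoint) triangles. Summing, the total number of triangles used is at most
\begin{equation*}
|\mathcal{F}| + \frac{n - 3|\mathcal{F}|}{2} + 1 \;=\; \frac{n - |\mathcal{F}|}{2} + 1 \;\leq\; \frac{n - 2n/15}{2} + \Oh(1) \;=\; \frac{13}{30}\,n + \Oh(1),
\end{equation*}
as required.

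The main obstacle lies in the last step: the triangles produced by Proposition~\ref{prop:induct+comp} on the leftover points must be genuine separators with respect to the entire $P$, not merely with respect to the leftover. A priori, one of them could contain a vertex of $\mathcal{F}$ of the opposite color in its interior, violating monochromaticity. Resolving this requires either a careful adaptation of the inductive construction in Proposition~\ref{prop:induct+comp} that steers the chosen triangles around the prescribed forbidden points (at a cost absorbed in the $\Oh(1)$ term), or, failing that, processing the leftover batch-locally—which would yield only the weaker constant $\tfrac{14}{30}n$, so sharpening to $\tfrac{13}{30}n$ genuinely relies on the delicate global argument above.
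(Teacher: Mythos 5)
Your first two steps are exactly the paper's argument: the SAT-verified fact that every $15$-point bichromatic set contains two vertex-disjoint empty monochromatic triangles, applied batch-wise in vertical strips of $15$ consecutive points, yielding $2\lfloor n/15\rfloor$ valid separator triangles that cover $6\lfloor n/15\rfloor$ points. The gap you correctly identify in your last step is real: Proposition~\ref{prop:induct+comp} only guarantees triangles that are monochromatic \emph{relative to the leftover set}, and such a triangle may well contain an opposite-colored vertex of one of the batch triangles in its interior, so the bound does not follow as written. But the resolution is far simpler than the ``delicate global argument'' you anticipate, and it is what the paper does: cover the at most $9\lfloor n/15\rfloor+14$ leftover points \emph{in monochromatic pairs}, i.e.\ sort the leftover blue points by $x$-coordinate, pair up consecutive ones, and take a sufficiently thin triangle around each connecting segment (likewise for red). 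By general position such a thin triangle contains no other point of the whole set $P$, so the interaction problem never arises, and the cost is $\tfrac12\bigl(9\lfloor n/15\rfloor+14\bigr)+\Oh(1)$ — the same $m/2+\Oh(1)$ you were hoping to get from Proposition~\ref{prop:induct+comp}. Note that this pooling is global across all batches, which is precisely why one gets $2\lfloor n/15\rfloor+\tfrac92\lfloor n/15\rfloor+\Oh(1)=\tfrac{13}{30}n+\Oh(1)$ rather than the $\tfrac{14}{30}n$ you compute for batch-local processing; since overlap of triangles is permitted here, nothing more is needed. One small point of care: make sure your computational lemma is stated for \emph{vertex}-disjoint triangles (as in the paper), not merely interior-disjoint ones, since the count of $6$ covered points per batch depends on the six vertices being distinct.
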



\begin{proof}
    As we discuss in \Cref{sec:computational}, any point set of 15 blue and/or red points contains two vertex-disjoint monochromatic triangles. Thus we find at least $2\left\lfloor\frac{n}{15}\right\rfloor$ vertex-disjoint monochromatic triangles in our point set, scanning through the point set from left to right. These already cover $6\left\lfloor\frac{n}{15}\right\rfloor$ points. There are at most $9\left\lfloor\frac{n}{15}\right\rfloor+14$ points left. We can cover them in pairs using at most an amount of $\frac 9  2\left\lfloor\frac{n}{15}\right\rfloor+\frac{15}{2}$ additional triangles. This yields at most $\frac{13}{30}n+\frac{15}{2}$ triangles in total.
\end{proof}

\begin{restatable}{lemma}{lemgentriangles}
   \label{lem:gen triangles}
 	Given a bichromatic point set in the plane consisting of $b$ blue and $r$ red points in general position, there is a set of at most $\frac{2}{3}r+\frac{5}{3}$ disjoint triangles such that every blue point is contained in a triangle and no triangle contains a red point. 
\end{restatable}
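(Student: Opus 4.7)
The plan is to strengthen the claim to arbitrary closed convex regions and prove by strong induction on $r$ the following: in any convex region $R \subseteq \mathbb{R}^2$ containing a bichromatic set of $r$ red and $b$ blue points in general position, at most $\lceil r/2 \rceil + 1$ pairwise disjoint triangles contained in $R$ suffice to cover every blue point while containing no red point. Since $\lceil r/2 \rceil + 1 \leq \frac{r+3}{2} \leq \frac{2}{3}r + \frac{5}{3}$ for every $r \geq 0$, this implies the lemma.

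For the base cases $r \leq 1$: with $r = 0$ a single triangle enclosing all blue points inside $R$ works, and with $r = 1$ any line through the unique red point (taken generically with respect to the blue points) splits $R$ into two red-free convex pieces, each coverable by one triangle. For the inductive step $r \geq 2$, I would pick an edge $\overline{r_1 r_2}$ of the convex hull of the red points and let $\ell$ be its supporting line; by the hull-edge property, the remaining $r - 2$ red points all lie strictly on one side of $\ell$. Accordingly, $\ell$ splits $R$ into a closed convex subregion $R^-$ containing no red in its interior, and a closed convex subregion $R^+$ containing exactly $r - 2$ red in its interior. The (finitely many) blue points strictly inside $R^-$ fit into one triangle $T_0$ strictly inside $R^-$, and the inductive hypothesis applied to $R^+$ produces at most $\lceil (r-2)/2 \rceil + 1 = \lceil r/2 \rceil$ pairwise disjoint triangles inside $R^+$ covering the blue points there. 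Since $T_0 \subseteq R^-$ while the recursive triangles lie in $R^+$, disjointness across the cut is automatic, giving the desired total of $1 + \lceil r/2 \rceil$ triangles.

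The main obstacle in executing this plan is the correct set-up of the induction: the statement must quantify over arbitrary convex subregions rather than only the whole plane, so that the recursive call into $R^+$ is legitimate. A minor secondary point is to verify that $T_0$ can be chosen strictly inside $R^-$ so as not to pick up $r_1$ or $r_2$ on its boundary; this is possible because general position forces all blue points in $R^-$ to lie strictly in the open half-plane bounded by $\ell$. Once these subtleties are handled, the argument reduces to a clean peeling recursion along convex-hull edges of the red point set, each step paying exactly one triangle for the newly exposed red-free region.
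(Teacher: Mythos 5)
There is a genuine gap, and it sits exactly at the step you dismiss as routine: the claim that the blue points strictly inside $R^-$ always fit into a single triangle $T_0$ contained in $R^-$. This is guaranteed only when $R^-$ is an intersection of at most three half-planes (what the paper calls a \emph{generalized triangle}), and your recursion destroys that property. After $k$ peeling steps the current region is an intersection of up to $k$ half-planes, so the next piece $R^-$ you slice off can be a convex polygon with four or more sides; for example, once three cuts bound a triangular cell containing the last two red points, the fourth cut produces a genuine quadrilateral. If blue points sit near all four corners of a (near-)square cell, any triangle containing them has at least twice the area of their convex hull and therefore cannot fit inside the cell, so one triangle per peeled region does not suffice and the induction fails at the fourth cut. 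A second, smaller issue is that $r_1,r_2$ lie on the boundary of $R^+$, so the triangles returned by the inductive call on the $r-2$ interior red points are not prevented from containing them; this one is repairable (general position keeps all blue convex hulls strictly off $\ell$), but the first one is not without changing the scheme. The fact that your claimed bound $\lceil r/2\rceil+1$ is strictly stronger than the lemma's $\frac{2}{3}r+\frac{5}{3}$ should have been a warning sign.

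The paper's proof is engineered precisely around this obstacle. It sweeps a vertical line through the red points in $x$-order and consumes them three at a time: the strip between consecutive sweep positions is bounded by two vertical lines and contains two red points, and the line through those two red points splits the strip into two cells that are each intersections of only three half-planes, hence each coverable by one triangle avoiding all red points. Two triangles per three red points is exactly where the $\frac{2}{3}r$ rate comes from. To salvage a hull-peeling argument you would have to either keep every cell bounded by at most three half-planes or spend more than one triangle on the complex cells, and either repair pushes the constant back toward the paper's.
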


\begin{figure}[hbt!]
	\centering
	\includegraphics[width=.5\linewidth]{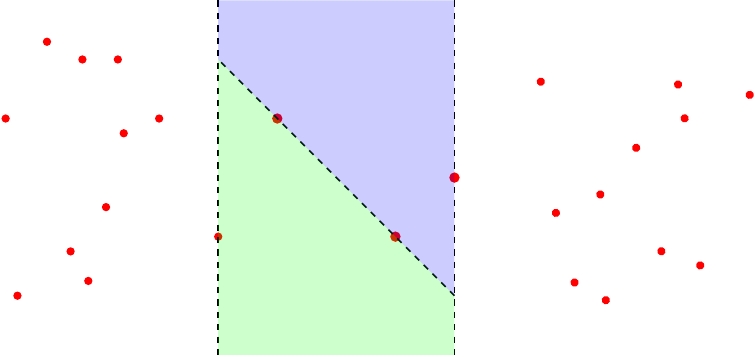}
	\caption{The two new generalized triangles for three new red points as in the proof of Lemma \ref{lem:gen triangles}.}
	\label{fig:gen triangles}
\end{figure}

\begin{proof}
    Let a \emph{generalized triangle} be an intersection of three half-planes such that every finite point set in it can be covered by a large enough triangle inside this intersection. 
	Note that it is sufficient to show that the plane can be covered by at most $\frac{2}{3}r+\frac{5}{3}$ generalized triangles such that all blue but none of the red points are contained in their disjoint interiors. 
 
	We assume without loss of generality that all points have pairwise different $x$-coordinates. Denote the $x$-coordinates of the red points by $x_1<\ldots<x_r$. Sweep a vertical line across the red points from left to right. First we cover the half-plane left of the vertical line $x=x_1$ by a single generalized triangle.
    At any given point we assume that the left half plane of the sweep line at $x=x_{r'}$ containing $r'$ red points is already fully covered by $\frac{2}{3}r'+\frac{1}{3}$ generalized triangles in the desired way. 
    Now consider the next three red points as in \Cref{fig:gen triangles}. Draw a vertical line $x=x_{r'+3}$. The two red points in between this line and the sweep line span a line splitting the strip between the vertical lines into two generalized triangles. Include these two generalized triangles into the covering and advance the sweeping vertical line. Note that the invariant is still fulfilled, since we increased $r'$ by 3 and used two additional generalized triangles.
    Repeat this until $r'>r-3$. If $r'=r$, we use the half-plane $x=x_r$ as the final generalized triangle to cover the plane with a total of $\frac{2}{3}r+\frac{4}{3}$ generalized triangles.
    Otherwise we add two generalized triangles for the right side of the sweeping line separated by a line through the remaining ($\leq 2$) red points.
    In total we cover the plane with at most $\frac{2}{3}(r-1)+\frac{1}{3}+2=\frac{2}{3}r+\frac{5}{3}$ generalized triangles since $r'\leq r-1$. This completes the proof.
\end{proof}

\begin{restatable}{proposition}{propbluedisjoint}
   Given a bichromatic planar $n$-point set, there are $\frac{2}{7}n+1$ disjoint triangles containing all blue points such that no triangle contains a red point.
   \label{prop:blue_disjoint}
\end{restatable}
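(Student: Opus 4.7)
The plan is to complement \Cref{lem:gen triangles} with a second upper bound depending on the number of blue points $b$ rather than the number of red points $r$, and then take the minimum of the two estimates for each admissible pair $(b,r)$ with $b + r = n$.

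\textbf{Step 1 (red-side bound).} \Cref{lem:gen triangles} already provides a set of at most $\tfrac{2}{3}r + \tfrac{5}{3}$ pairwise disjoint triangles that cover all blue points and contain no red points.

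\textbf{Step 2 (blue-side bound of $\lceil b/2 \rceil$).} I would prove that the blue points can always be covered by at most $\lceil b/2 \rceil$ pairwise disjoint triangles avoiding red points. Order the blue points $p_1, \dots, p_b$ by increasing $x$-coordinate (general position makes the $x$-coordinates distinct). For each $i = 1, \dots, \lceil b/2 \rceil - 1$ choose a vertical line strictly between the $x$-coordinates of $p_{2i}$ and $p_{2i+1}$ and avoiding all red $x$-coordinates; these lines split the plane into $\lceil b/2 \rceil$ open vertical strips $S_1, \dots, S_{\lceil b/2 \rceil}$, where $S_i$ contains precisely the two blue points $p_{2i-1}, p_{2i}$ (and only $p_b$ for the last strip in the odd case). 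Inside $S_i$ I would form a thin ``sliver'' triangle with $p_{2i-1}$ and $p_{2i}$ as two of its vertices and a third vertex placed at the midpoint $M_i$ of the segment $p_{2i-1}p_{2i}$, displaced perpendicularly by a tiny amount $\varepsilon > 0$. By general position every red point in $S_i$ has positive distance to the line supporting $p_{2i-1}p_{2i}$, and $M_i$ has positive distance to the boundary of $S_i$; choosing $\varepsilon$ strictly smaller than both of these distances forces the sliver to stay inside $S_i$ and ensures its interior contains no red point. For the singleton strip in the odd case, any sufficiently small triangle around $p_b$ works. Disjointness across strips is immediate.

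\textbf{Step 3 (combination).} Using $b + r = n$ and taking the minimum of the two bounds, I would split into two cases. If $b \le \tfrac{4}{7}n$, then the blue-side bound gives $\lceil b/2 \rceil \le \lfloor \tfrac{2}{7}n \rfloor + 1$. Otherwise $r < \tfrac{3}{7}n$, and the red-side bound gives $\tfrac{2}{3}r + \tfrac{5}{3}$, which a short integer-rounding analysis shows is likewise at most $\lfloor \tfrac{2}{7}n \rfloor + 1$. The two bounds meet near the balanced crossover $b \approx \tfrac{4}{7}n$, $r \approx \tfrac{3}{7}n$, with common value $\approx \tfrac{2}{7}n$.

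\textbf{Main obstacle.} The delicate step is the sliver construction in Step 2: the thin triangle must simultaneously use the two chosen blue points as vertices, remain inside its vertical strip, and avoid every red point in that strip, all controlled by the single parameter $\varepsilon$. The admissibility of some $\varepsilon > 0$ rests on the positivity of the relevant general-position distances. Once the blue-side bound is established, the combination in Step 3 is a brief case analysis.
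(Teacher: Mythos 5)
Your proposal is essentially the paper's own proof: it combines the $\tfrac{2}{3}r+\tfrac{5}{3}$ bound of \Cref{lem:gen triangles} with a left-to-right pairing of the blue points into $\lceil b/2\rceil$ disjoint sliver triangles, and then splits on which of $b,r$ is small; the paper states the pairing step without your strip-by-strip details, which are a fine elaboration. The only adjustment needed is the location of the case boundary: with your split at $b\le\tfrac{4}{7}n$, the complementary case only gives $r<\tfrac{3}{7}n$ and hence $\tfrac{2}{3}r+\tfrac{5}{3}$ can exceed $\tfrac{2}{7}n+1$ (e.g.\ $n=10$, $r=4$), so you should split at $b\le\tfrac{4}{7}n+1$ versus $r\le\tfrac{3}{7}n-1$ as the paper does, where both estimates come out to at most $\tfrac{2}{7}n+1$ exactly.
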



\begin{proof}
    Let $b$ be the number of blue points and $r$ the number of red points. If $b\le\frac{4}{7}n+1$, then we can cover all blue points in pairs from left to right, such that the resulting triangles are disjoint. This way we need at most $\frac{b+1}{2}\le\frac{2}{7}n+1$ triangles. If $r\le\frac{3}{7}n-1$, we can apply Lemma \ref{lem:gen triangles} to cover all blue points using at most $\frac{2}{3}r\le\frac{2}{7}n+1$ triangles. Since $b+r=n$ one of these two has to be the case, finishing the proof.
\end{proof}

\begin{restatable}{lemma}{lemgreedy}
   Given a bichromatic point set in the plane consisting of $b$ blue and $b-3t$ red points in general position, there is a set of $t$ triangles with vertices at different blue points that do not contain a red point.
	\label{lem:b-3t}
\end{restatable}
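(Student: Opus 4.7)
The plan is to prove the lemma by induction on $t$, extracting one empty blue triangle at a time via a triangulation counting argument. The base case $t=0$ is trivial, since the empty collection of triangles suffices.

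For the inductive step, assume $t \geq 1$, so the number of red points is $b - 3t \leq b - 3$ and in particular $b \geq 3t \geq 3$. First I would fix any triangulation $\mathcal{T}$ of the $b$ blue points (temporarily ignoring the red points). By the standard Euler-formula count for triangulations of a planar point set in general position, $|\mathcal{T}| = 2b - h_b - 2 \geq b - 2$, where $h_b \leq b$ denotes the number of blue points on the convex hull of the blue set. General position ensures that no red point lies on an edge of $\mathcal{T}$, so each red point lies in the interior of exactly one triangle of $\mathcal{T}$. Therefore at most $b - 3t$ triangles of $\mathcal{T}$ contain a red point, and since $b - 3t < b - 2 \leq |\mathcal{T}|$, at least one triangle $T \in \mathcal{T}$ is empty of red points.

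Next, I would take $T$ as the first of the $t$ desired triangles, delete its three blue vertices from the point set, and apply the inductive hypothesis to the remaining bichromatic point set, which has $b - 3$ blue points and $b - 3t = (b-3) - 3(t-1)$ red points. This yields $t - 1$ further triangles whose vertices are distinct blue points among the remaining $b - 3$, none of them containing a red point. Together with $T$, these form $t$ triangles whose $3t$ vertices are pairwise distinct blue points of the original configuration and none of which contains a red point, as required.

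The only substantive step is the triangulation counting bound showing that whenever the blue points exceed the red points by at least three, some triangle in an arbitrary blue triangulation is empty of red points; the rest is bookkeeping, since removing three blue vertices decreases $b$ by three and $t$ by one, preserving the relation $r = b - 3t$ that drives the induction. A possible pitfall to watch for is degeneracy in the triangulation when red points lie on a blue edge, but general position rules this out, so the argument goes through cleanly.
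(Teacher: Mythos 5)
Your proof is correct and follows essentially the same strategy as the paper's: induction on $t$, peeling off one red-empty blue triangle found by pigeonholing at most $b-3$ red points against at least $b-2$ interior-disjoint blue-vertex triangles, then deleting its three vertices. The only difference is that the paper obtains its $b-2$ interior-disjoint triangles as a fan around an arbitrary blue point rather than from a full triangulation of the blue set; both yield the same count and the rest of the argument is identical.
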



\begin{proof}
	We apply induction on $t$. If $t=0$, we do not have anything to show. For the induction step let $t\geq 1$ and $p$ be an arbitrary blue point. 
	
	The other blue points have a fixed cyclic order around $p$ such that at most one angle between two consecutive ones of them is larger than $\pi$. Thus out of the $b-1$ angles, at least $b-2$ can be used to define blue triangles of $p$ and two consecutive points. The $b-2$ triangles constructed this way are interior disjoint. Since $t\geq1,$ the number of red points is at most $b-3$. Hence at least one of the $b-2$ triangles does not contain red points. We now apply induction on the $b-3$ blue vertices that are not incident to this triangle and all of the $b-3t=(b-3)-3(t-1)$ red points to obtain $t-1$ triangles with vertices at different blue points that do not contain a red point. Since the original triangle does not share a vertex with any of these by construction and does not contain red points either, the lemma follows.
\end{proof}

\begin{restatable}{corollary}{corgreedy}
\label{cor:greedily covering triangles}
   Given a bichromatic point set in the plane consisting of $b$ blue and $b-3t$ red points in general position, where $t\in\mathbb{N}_0, t\le \frac{b}{3}$, there is a set of $\lceil\frac{b-t}{2}\rceil$ triangles such that every blue point is contained in a triangle and no triangle contains a red point.
\end{restatable}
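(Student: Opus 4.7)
My plan is to stack Lemma~\ref{lem:b-3t} on top of a straightforward pairing argument for the remaining blue points. First I would invoke Lemma~\ref{lem:b-3t} with the given $t\le b/3$ to obtain a family $\mathcal{T}_0$ of $t$ triangles, each having three distinct blue points as vertices and none containing a red point. These triangles already cover $3t$ blue points (namely, the vertices themselves, which lie in the closed triangles).

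Next I would handle the at most $b-3t$ blue points that are not yet covered by $\mathcal{T}_0$. I would pair them up arbitrarily, leaving at most one point unpaired if $b-3t$ is odd. For every pair $\{p,q\}$ of blue points, the open segment $pq$ misses the finitely many red points by general position, so a sufficiently thin triangle around $pq$ contains both $p$ and $q$ but no red point; a possible unpaired blue point can be covered by a small triangle containing only it. This contributes $\lceil (b-3t)/2 \rceil$ further triangles, none of which contains a red point.

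Adding everything up gives
\[
    t + \left\lceil \tfrac{b-3t}{2} \right\rceil \;=\; \left\lceil t + \tfrac{b-3t}{2} \right\rceil \;=\; \left\lceil \tfrac{b-t}{2} \right\rceil
\]
triangles, which is the claimed bound. Pulling the integer $t$ inside the ceiling is valid by the standard identity $\lceil x + k \rceil = \lceil x\rceil + k$ for $k\in\mathbb{Z}$.

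The only step with any real content is the appeal to Lemma~\ref{lem:b-3t}; once that is in place, the thin-triangle construction for a pair (or singleton) of blue points is immediate from finiteness of the red set and general position, so I do not expect a genuine obstacle beyond checking the arithmetic above.
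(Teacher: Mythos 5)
Your proposal is correct and follows essentially the same route as the paper: apply Lemma~\ref{lem:b-3t} to get $t$ red-point-free blue triangles covering $3t$ blue points, cover the remaining blue points in pairs by thin triangles around segments, and combine $t+\lceil(b-3t)/2\rceil=\lceil(b-t)/2\rceil$. The arithmetic and the general-position justification for the thin triangles are both sound.
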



\begin{proof}
     By Lemma \ref{lem:b-3t}, there are $t$ vertex-disjoint blue triangles covering at least 3 blue points each and not containing any red point. The remaining $b-3t$ blue points can be covered in pairs as we did in \Cref{prop:match+comp}. Using Iverson brackets $[\cdot]$ the number of triangles used is
	\begin{equation*}
	t+\left\lceil\frac{b-3t}{2}\right\rceil = \frac{2t+b-3t}{2}+\frac{[2\nmid b-3t]}{2} = \frac{b-t}{2}+\frac{[{2\nmid b-t}]}{2} = \left\lceil\frac{b-t}{2}\right\rceil.
	\end{equation*}
\end{proof}

\begin{restatable}{proposition}{propcombinedupperbounds}
   \label{prop:combined upper bounds}
   Given a bichromatic planar $n$-point set, there is a set of $\frac{4}{15}n + \Oh(1)$ vertices, segments and triangles containing no red points and using all blue points as vertices exactly once.
\end{restatable}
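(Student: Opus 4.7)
The plan is to combine three previously established bounds by a case analysis on the ratio $b/n$, where $b$ is the number of blue points and $r=n-b$ the number of red points. The three building blocks are: (i) a left-to-right sweep pairing consecutive blue points into $\lceil b/2 \rceil$ shapes, each realized as a thin triangle (or a segment) that avoids all red points by general position; (ii) \Cref{cor:greedily covering triangles} applied with $t = \lfloor (b-r)/3\rfloor$, which yields $(2b+r)/6 + \Oh(1)$ shapes and is available whenever $r \leq b$; and (iii) \Cref{lem:gen triangles}, which yields $\tfrac{2}{3}r + \Oh(1)$ red-free triangles covering all blue points.

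Comparing each of these bounds to the target $\tfrac{4}{15}n$ shows that bound (i) is sufficient when $b \leq \tfrac{8}{15}n$, bound (ii)---noting that $(2b+r)/6 = (b+n)/6$ since $b+r=n$---is sufficient when $b \leq \tfrac{9}{15}n$, and bound (iii) is sufficient when $r \leq \tfrac{6}{15}n$, i.e.\ when $b \geq \tfrac{9}{15}n$. Hence the three regimes $b \leq \tfrac{8}{15}n$, $\tfrac{8}{15}n < b \leq \tfrac{9}{15}n$, and $b > \tfrac{9}{15}n$ together cover all cases. In the middle regime one additionally has $b > \tfrac{8}{15}n > n/2 \geq r$, so the hypothesis $r \leq b$ of \Cref{cor:greedily covering triangles} is satisfied, and in each regime the chosen bound evaluates to at most $\tfrac{4}{15}n + \Oh(1)$ shapes. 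Concatenating the three constructions, using pairs/triangles with blue vertices wherever possible and the generalized triangles of \Cref{lem:gen triangles} when red points become scarce, gives the claimed set of at most $\tfrac{4}{15}n + \Oh(1)$ shapes.

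The conceptual core of the proof is locating the two breakpoints, which arise from minimizing the pointwise maximum of these three linear functions in $b/n$; the worst case lies at $b = \tfrac{3}{5}n$, where bounds (ii) and (iii) simultaneously meet the target $\tfrac{4}{15}n$. I expect the main obstacle to be not the case analysis itself but the careful bookkeeping at the transitions: verifying that the ceiling losses in the pairing bound, the rounding in the choice of $t$ in \Cref{cor:greedily covering triangles}, and the additive $\tfrac{5}{3}$ in \Cref{lem:gen triangles} are all genuinely absorbed into the single $\Oh(1)$ term, and checking that the construction can be phrased as \emph{vertices}, \emph{segments}, and (generalized) \emph{triangles} in each of the three regimes.
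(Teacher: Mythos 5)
Your proposal is correct and follows essentially the same route as the paper: both combine \Cref{lem:gen triangles} and \Cref{cor:greedily covering triangles} (with plain pairing covering the red-heavy case), and both locate the worst case at $b=\tfrac{3}{5}n$ where the two bounds balance. The only difference is presentational --- you split into three explicit regimes with breakpoints at $\tfrac{8}{15}n$ and $\tfrac{9}{15}n$, whereas the paper takes the pointwise minimum of the two bounds and maximizes over the continuous parameter $x=\tfrac{b-r}{3}$, observing that one bound is decreasing and the other increasing in $x$; the rounding and additive constants are absorbed into $\Oh(1)$ in either formulation.
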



\begin{proof}
	Let $r$ and $b$ be the number of red and blue points, respectively, and set $x:=\frac{b-r}{3}\leq \frac{b}{3}$. If there are more red points than blue points, then we can cover the blue points in pairs for an even better upper bound of $\frac{1}{2}b+\Oh(1)\leq\frac{1}{4}(r+b)+\Oh(1)$, so in the remaining case $x\geq 0$.
 Note that $n:=r+b=2b-3x$ is the total number of points. Taking the bounds from Lemma \ref{lem:gen triangles} with $t:=\lfloor x\rfloor$ and Corollary~\ref{cor:greedily covering triangles} with $r=b-3x$, we get that the number of triangles required per point is upper bounded by
	\begin{equation*}
	\max_{x\in[0,\frac{b}{3}]}\min\left(\frac{\frac{2}{3}(b-3x)+\frac{5}{3}}{n},\frac{1}{n}\left\lceil\frac{b-\lfloor x\rfloor}{2}\right\rceil\right).
	\end{equation*}
 This expression can be upper bounded by
 \begin{equation*}
    \max_{x\in[0,\frac{b}{3}]}\min\left(\frac{2x-\frac{2}{3}b-\frac{5}{3}}{3x-2b}, \frac{x-b-2}{6x-4b}\right)
 \end{equation*}
 via rearranging the terms and observing that
 \begin{equation*}
    \left\lceil\frac{b-\lfloor x\rfloor}{2}\right\rceil\le\frac{b-\lfloor x\rfloor+1}{2}\le\frac{b- x+2}{2}.
 \end{equation*}
 Regarding $b$ as a constant, we can see that both expressions in the minimum are monotone in $x$ for $x\in[0,\frac{b}{3}]$. The first is monotonically decreasing while the second is monotonically increasing. This implies that the worst case is attained for the unique value of $x$ making both expressions equal. Setting them equal yields $x=\frac{1}{9}(b-16)$. Therefore the number of triangles required per point is upper bounded by
 \begin{equation*}
    \frac{\frac{1}{9}(b-16)-b-2}{\frac{6}{9}(b-16)-4b+4} = \frac{4b+17}{15b+30}.
 \end{equation*}
 This completes the proof.
\end{proof}

\begin{restatable}{proposition}{propnobluetriangle}
   \label{prop:no blue triangles}
   There are bichromatic planar $n$-point sets such that there is no set of $\lfloor\frac{n}{4}\rfloor$ triangles containing all blue points and no set of $\frac{3}{8}n+\Oh(1)$ triangles containing all points such that no triangle contains points of both colors.
\end{restatable}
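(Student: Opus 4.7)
The plan is to exhibit a single explicit bichromatic construction that simultaneously witnesses both lower bounds. Place $b$ blue points $p_1,\ldots,p_b$ in convex position on a circle in this cyclic order, set $e$ to be the chord $p_bp_1$, and consider the straight-line drawing of the complete graph $K_b$ on these points. For each $i\in\{2,\ldots,b-1\}$ the vertex $p_i$ is incident to a unique cell of this drawing lying inside the triangle $p_1p_bp_i$; place a red point $r_i$ in it. After a generic perturbation this gives a bichromatic set in general position with $b$ blue and $b-2$ red points, so $n=2b-2$.

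The proof will rest on two claims. \textbf{Claim 1:} No triangle contains three blue points and no red point. By convexity any such triangle contains the triangle spanned by three of the blue points it covers, so it suffices to consider a triangle $abc$ with blue corners. A short case analysis shows that one corner, say $b=p_i$ with $i\in\{2,\ldots,b-1\}$, is strictly separated from $e$ by the opposite side $ac$; by construction the red point $r_i$ lies in the cell cut off by $e$ on the side of $p_i$, so the segment $p_ir_i$ crosses $e$ and hence $ac$, placing $r_i$ inside $abc$.

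\textbf{Claim 2:} No triangle avoiding the blue points contains more than four red points. Fix such a triangle $T$ and consider the three lines $\ell_1,\ell_2,\ell_3$ supporting its sides. Every blue point lies in the open half-plane of some $\ell_j$ disjoint from $T$, so the blue points split into three consecutive arcs $B_1,B_2,B_3$ along the circle whose convex hulls are pairwise disjoint and disjoint from $T$. Applying \Cref{lem:b-3t} inside $\mathrm{conv}(B_j)$, the absence of a blue triangle without a red point there forces at least $|B_j|-2$ red points into $\mathrm{conv}(B_j)$, and hence outside $T$. Summing over $j$ yields at least $b-6$ red points outside $T$, so $T$ contains at most $(b-2)-(b-6)=4$ red points.

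Combining the two claims gives the proposition. Claim 1 forces at least $\lceil b/2\rceil$ triangles for the blue points, which strictly exceeds $\lfloor n/4\rfloor=\lfloor(b-1)/2\rfloor$ for both parities of $b$. Claim 2 additionally forces at least $\lceil(b-2)/4\rceil$ triangles for the red points, giving a total of $\tfrac{3}{4}b-\Oh(1)=\tfrac{3}{8}n-\Oh(1)$. The main obstacle is Claim 2: I must carefully partition the blue points among the three supporting lines so that the three arcs remain consecutive on the circle and hence have pairwise disjoint convex hulls, and I must verify that \Cref{lem:b-3t} applies correctly for every arc, including the degenerate cases $|B_j|\leq 2$ which simply contribute nothing to the red-point count.
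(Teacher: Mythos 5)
Your proposal is correct and follows essentially the same route as the paper's proof: the identical construction (blue points on a circle with a red point in the cell incident to $p_i$ inside the triangle spanned by $e$ and $p_i$), the same two claims proved the same way (the middle corner's red point lands inside any blue triangle; the three supporting lines partition the blue points into three arcs to which Lemma~\ref{lem:b-3t} is applied), and the same final count. The only difference is cosmetic: you explicitly flag the degenerate arcs with $|B_j|\leq 2$, which the paper leaves implicit.
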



\begin{proof}
	Consider the following construction. Place blue points $p_1,\ldots, p_b$ on a circle in that order and draw the complete graph on them with straight lines. Let $e$ be the edge between $p_b$ and $p_1$. Each point $p_i$ is incident to $b-2$ cells. If $i\neq 1,b$ exactly one of these cells is inside the triangle defined by $e$ and $p_i$. Place a red point $r_i$ inside each such cell. This way we place $b-2$ red points. See \Cref{fig:no blue triangle} for illustration.
	
\begin{figure}[htb]
\begin{minipage}{.33\textwidth}
\centering
\includegraphics[width=.9\linewidth]{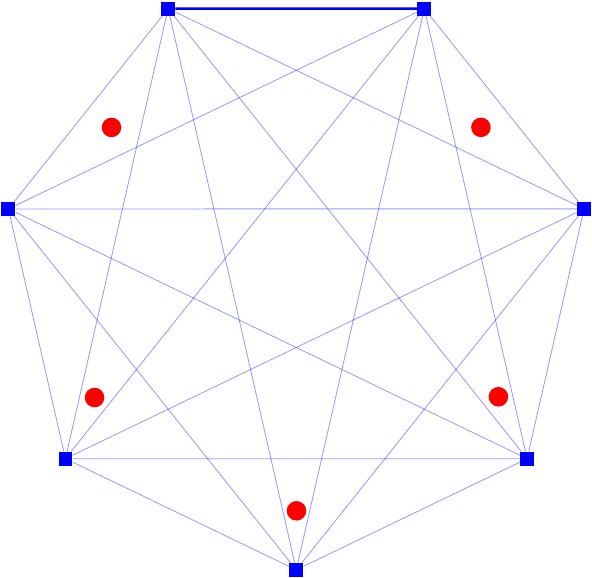}
\end{minipage}%
\begin{minipage}{.33\textwidth}
\centering
\includegraphics[width=.9\linewidth]{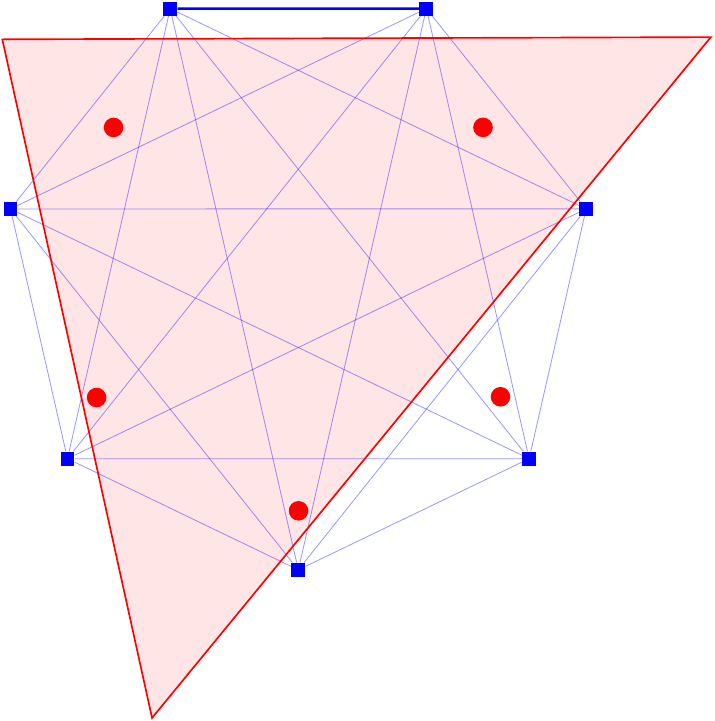}
\end{minipage}%
\begin{minipage}{.33\textwidth}
\centering
\includegraphics[width=.9\linewidth]{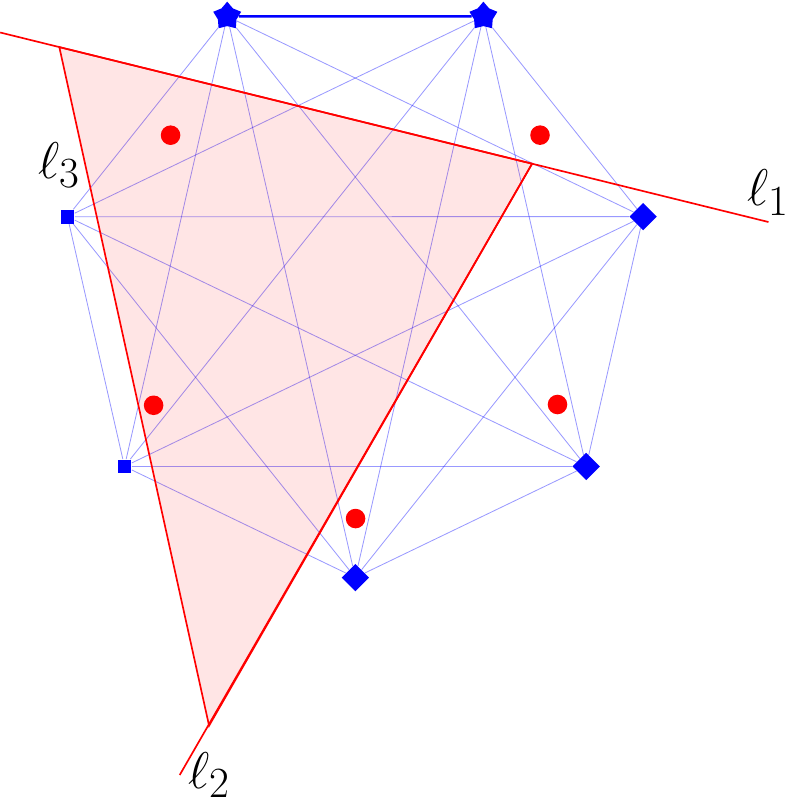}
\end{minipage}%
	\caption{Illustrations for the proof of \Cref{prop:no blue triangles}. From left to right, (a): The construction. (b): A red triangle covering 4 red points. (c): The lines $\ell_1,\ell_2$ and $\ell_3$ as in the proof. $B_1$ consists of the star-shaped blue points, $B_2$ of the diamond-shaped blue points and $B_3$ of the square-shaped blue points.}
	\label{fig:no blue triangle}
\end{figure}
	\begin{claim*}
	    There is no blue triangle containing three blue points. 
	\end{claim*} 
    \begin{claimproof}
        We see that it is enough to consider triangles with blue points as corners as any triangle containing three blue points contains one of these in its interior. Let $abc$ be a triangle where $a,b$ and $c$ are blue points. Let $a$ and $c$ be the points to the left and to the right of the edge $e$ along the circle. We see that $b$ is not a point of $e$, i.e. we have placed a red point $p$ next to $b$. The line $bp$ intersects $e$ and hence also intersects the line $ac$. Therefore $p$ is inside $abc$.
    \end{claimproof}

	\begin{claim*}
		There is no red triangle containing more than four red points.
	\end{claim*}

 \begin{claimproof}
     Consider the three lines $\ell_1,\ell_2$ and $\ell_3$ defined by the edges of a red triangle. For every blue point at least one of these lines separates the triangle and this point. Let $B_1$ be the set of blue points separated by $\ell_1$ and from among the rest of the points the points of $B_2$ are separated by $\ell_2$. Let $B_3$ be the remaining points which are thus separated from the triangle by $\ell_3$. 
	In particular, $|B_1|+|B_2|+|B_3|= b$. Note that since we cut them off using a line, all three of the $B_i$ are consecutive points along the convex hull of blue points, so their convex hulls do not intersect each other or the triangle. But to avoid any blue triangles in $B_i$ we need at least $|B_i|-2$ red points inside of its convex hull, see Lemma \ref{lem:b-3t}.
	
	
	Hence there are at least $|B_1|-2+|B_2|-2+|B_3|-2=b-6$ red points that are not contained in the triangle. This implies that at most 4 red points are covered. It is easy to see that indeed 4 red points can be covered by a triangle not containing a blue point. 
 \end{claimproof}

 \medskip
	
	With these two claims it is now clear that we need at least $\frac{1}{4} (b-2)$ triangles for the red points and at least $\frac{1}{2} b$ triangles for the blue points. As $n=2b-2$ we need $k+1$ triangles for the blue points if and only if $b\geq 2k+1\Leftrightarrow {n\geq 4k }$ and up to the constant $\frac{3}{4}b\approx\frac{3}{8}n$ triangles in total.
\end{proof}

\section{Computational Aspects}
\label{sec:computational}

For a point set in general position which we study in this paper, three points $a,b,c$ are either oriented clockwise or counterclockwise. 
This gives rise to so-called triple orientations, which can be encoded as Boolean variables.
Using these variables we encode questions about sets of a fixed number of points 
as a Boolean satisfiability instance,
which we then study using a SAT solver.

Not every assignment of triple orientations can be represented by a point set
and deciding realizability is in fact a computationally hard problem \cite{mnev1988universality}.
We can however add clauses to ensure that certain necessary conditions are fulfilled.
More specifically, we consider the search space of so-called \emph{pseudoconfiguration of points}, which contain all point sets.
Besides using the combinatorial structure of triangle orientations, we assume without loss of generality that the points have distinct $x$-coordinate, which allows us to order them from left to right with increasing $x$-coordinate.
The triangle orientation of those points are encoded in the combinatorial setting of rank~3 signotopes. This structure turns out to be more efficient in the encoding since we have to add less constraints.

We use the encoding from~\cite{Scheucher2020}, which comes with one Boolean variable 
for each triple of points 
to indicate whether the triple is positively or negatively oriented. Clauses ensure that solutions of the Boolean formula correspond to signotopes.
In addition, we need to assign a color to each point. For this we introduce variables and clauses which ensure that each point has exactly one color. 

For the considered problem, we look for triangles covering some points of the same color and not containing points of the other color. 
In order to formulate it as a CNF, we only consider triangles which are spanned by points of the actual point set. Three points give a proper triangle. However, if only two points are contained in a triangle we always find a sufficiently small triangle covering the straight-line segment between those two points without containing any other points of the point set. Similarly this holds for triangles consisting of only one point. 
Moreover,
the encoding comes with auxiliary variables to indicate whether a pair of edges cross,
and auxiliary variables to indicate whether a triangle contains points in its interior.
Since those properties are characterized by the triple orientations, these auxiliary variables are synchronized with the orientation variables via clauses.

\subsection{Full cover}

In the proof of \cref{prop:induct+comp} we study a bichromatic point set $P$ and consider the problem whether we can cover this point set with $k$ monochromatic triangles such that each triangle is spanned by at most 3 points of $P$ with the same color and does not contain any points of the other color. 
To simplify this problem with computer investigation, we consider triangles which do not contain any other points of $P$.

Assume towards a contradiction that a bichromatic point set $P=\{p_1,\ldots,p_n\}$ exists 
which cannot be covered by $k$ triangles.
We can assume without loss of generality that the points $p_1,\ldots,p_n$ have increasing $x$-coordinate,
and therefore
the induced triple-orientations form a rank~3 signotope on~$[n]$ as discussed above. 

To the general setting above which encodes bichromatic point sets, we add clauses to ensure that no partition  
$[n] = \bigcup_{i=1}^k I_j$ 
with $|I_j| \in \{1,2,3\}$ is valid.
In particular, for each such partition  $(I_1,\ldots,I_k)$,
we add a clause to ensure that there is a pair $I_a,I_b$ which is not disjoint
or one $I_a$ does not span a monochromatic empty triangle. 
If two $I_a,I_b$ parts are not disjoint, then their convex hulls have an intersection, which we can encode using the intersection auxiliary variables. 
In the other case if one $I_a$ does not span a monochromatic empty triangle, then either two points of $I_a$ have the same color or the triangle contains a point of $[n]$. Those conditions can be encoded using the variables encoding the colors of the points or the empty triangle auxiliary variables. 
Even though the number of partitions $[n] = \bigcup_{i=1}^k I_j$  grows exponentially in $n$,
the instances for up to $n=15$ are of reasonable size and can be performed on a laptop. For larger~$n$, where more RAM was required, we used a computing cluster.

Moreover, we used the symmetry breaking (as explained in~\cite{Scheucher2020}) to reduce the search space: without loss of generality the points $s_2,\ldots,s_n$ appear in this cyclic order around~$s_1$ (which lies on the convex hull).

\subsection{Two vertex-disjoint triangles}

In~\cref{prop:match+comp}, we want to find a lower bound on the number of triangles with which every bichromatic point set in general position in the plane can be covered. 
As shown in \cite{Scheucher2020}, every set of 17 points
contains a pair of area-disjoint empty pentagons \cite{Scheucher2020}. 
By pigeon hole principle every such empty pentagon contains a monochromatic empty triangle. 
Hence 17 is an upper bound on the number of points required to find two vertex-disjoint monochromatic empty triangles.
Partitioning the point set with a sweeping line in parts of 17 points shows that 6 out of those 17 points can be covered with two triangles. The remaining points can be covered with monochromatic matching edges. 

Using the SAT framework, we show that every set of 15 points contains a pair of vertex-disjoint triangles, which clearly improves the bound as discussed in the proof of~\cref{prop:match+comp}.
Since there exists a set of 14 points with no pair of vertex-disjoint triangles, 
the bound is optimal.

For this problem we look for a bichromatic point set which avoids two vertex-disjoint triangles spanned by exactly three points of the same color. 
Again we start with the framework explained above and model this problem by adding a clause that for each pair of monochromatic triangles they intersect or one of them is not empty. 
Again this can be done with the auxiliary variables explained above. 

Additionally we checked for further improvement, by allowing that the vertex disjoint monochromatic triangles contain points of the same color. Using our SAT framework this does not yield an improvement as there are bichromatic point sets with 14 points without two such monochromatic triangles but each set of 15 points contains two such triangles.  
In order to get an improvement we also tried to find three vertex disjoint monchromatic triangles. 
Using our program we found a pseudopoint configuration on 20 elements with no three disjoint triangles.
The computations took about 10 CPU hours. Since we add clauses for each set of three triangles, the instances grow fast and we did not manage to start larger instances.

\end{document}